\pgfplotsset{plot coordinates/math parser=false}
\pgfplotsset{plot coordinates/math parser=false}
	\tikzset{new spy style/.style={spy scope={magnification=4,size=2cm, connect spies,every spy on node/.style={rectangle,draw,},
				every spy in node/.style={draw,rectangle,}}}}
\newacronym{acas}{ACAS}{assisted \ac{cas}}
\newacronym{awgn}{AWGN}{additive white Gaussian noise}
\newacronym{boc}{BOC}{binary offset carrier}
\newacronym{bgd}{BGD}{broadcast group delay}
\newacronym{cas}{CAS}{commercial authentication service}
\newacronym{cdf}{CDF}{cumulative distribution function}
\newacronym{cdma}{CDMA}{code division multiple access}
\newacronym{chim}{CHIMERA}{chips-message robust authentication}
\newacronym{cs}{CS}{commercial service} 
\newacronym{det}{DET}{detection error trade-off}
\newacronym{dop}{DOP}{dilution of precision} 
\newacronym{ecs}{ECS}{encrypted code sequence}
\newacronym{ecef}{ECEF}{Earth-Centered Earth-Fixed}
\newacronym{enu}{ENU}{east, north, up}
\newacronym{fa}{FA}{false alarm}
\newacronym{gdop}{GDOP}{geometric dilution of precision}
\newacronym{glrt}{GLRT}{generalized likelihood ratio test}
\newacronym{ggto}{GGTO}{GPS to Galileo time offset}
\newacronym{gnss}{GNSS}{global navigation satellite system}
\newacronym{gps}{GPS}{global positioning system}
\newacronym{gpst}{GPST}{GPS system time}
\newacronym{gsc}{GSC}{GNSS service center}
\newacronym{gst}{GST}{Galileo system time}
\newacronym{has}{HAS}{high accuracy service}
\newacronym{imu}{IMU}{inertial measurement unit}
\newacronym{icto}{ICTO}{inter-constellation time offset}
\newacronym{iot}{IoT}{Internet of things}
\newacronym{isb}{ISB}{inter-system bias}
\newacronym{lan}{LAN}{local area network}
\newacronym{los}{LOS}{line-of-sight}
\newacronym{ls}{LS}{least squares}
\newacronym{mac}{MAC}{message authentication code}
\newacronym{md}{MD}{missed detection}
\newacronym{mse}{MSE}{mean square error}
\newacronym{nma}{NMA}{navigation message authentication}
\newacronym{nmea}{NMEA}{national marine electronics association}
\newacronym{ntp}{NTP}{network time protocol}
\newacronym{osnma}{OS-NMA}{open service \ac{nma}}
\newacronym{pdop}{PDOP}{position dilution of precision}
\newacronym{pdf}{PDF}{probability density function}
\newacronym{pnt}{PNT}{position navigation and timing}
\newacronym{ppm}{ppm}{parts per million}
\newacronym{prn}{PRN}{pseudo-random noise}
\newacronym{ptp}{PTP}{precision time protocol}
\newacronym{pvt}{PVT}{position, velocity, and time}
\newacronym{recs}{RECS}{re-encrypted code sequence}
\newacronym{rinex}{RINEX}{receiver independent exchange format}
\newacronym{sca}{SCA}{spreading code authentication}
\newacronym{sce}{SCE}{spreading code encryption}
\newacronym{scer}{SCER}{secure code estimation and replay}
\newacronym{sbf}{SBF}{Septentrio binary format}
\newacronym{sdr}{SDR}{software-defined radio}
\newacronym{sqm}{SQM}{signal quality monitoring}
\newacronym{sssc}{SSSCs}{spread spectrum security codes}
\newacronym{sv}{SV}{satellites vehicles}
\newacronym{tesla}{TESLA}{timed-efficient stream loss-tolerant authentication} 
\newacronym{uere}{UERE}{user equivalent ranging error}
\newacronym{utc}{UTC}{coordinated universal time}
\newacronym{vctcxo}{VCTCXO}{voltage-controlled and temperature-controlled crystal oscillator}
\newlength\fwidth
\newlength\fheight
\newtheorem{theorem}{Theorem}
\DeclareMathOperator*{\argmin}{arg\,min} 
\newcommand{\cmark}{\ding{51}}%
\newcommand{\xmark}{\ding{55}}%
\begin{document}

\title{On the Limits of Cross-Authentication Checks \\ for GNSS Signals}
\author{ \IEEEauthorblockN{ Francesco~Ardizzon {\em IEEE Member}, Laura~Crosara {\em IEEE Student Member}, Stefano~Tomasin {\em IEEE Senior Member}, and Nicola~Laurenti}\thanks{Department of Information Engineering (DEI), University of Padova, Italy. e-mail: \small \texttt{\{ardizzonfr, crosaralau, tomasin, nil\}@dei.unipd.it} 
}
}
\normalsize
\maketitle

\begin{abstract}
\Acp{gnss} are implementing security mechanisms: examples are Galileo open service navigation message authentication (OS-NMA) and GPS \ac{chim}. %Such mechanisms allow to strengthen the security of the computed \ac{pvt} solutions, respectively allowing the verification of the received navigation messages and the spreading codes. %In particular, Galileo, \ac{osnma} is integrated in the Galileo E1 BC signals, while GPS \ac{chim} solution aims at authenticating both navigation messages and spreading codes. 
Each of these mechanisms operates in a single band. However, nowadays, even commercial \ac{gnss} receivers typically compute the \ac{pvt} solution using multiple constellations and signals from multiple bands at once, significantly improving both accuracy and availability. 
Hence, cross-authentication checks have been proposed, based on the \ac{pvt} obtained from the mixture of authenticated and non-authenticated signals.

In this paper, first, we formalize the models for the cross-authentication checks. Next, we describe, for each check, a spoofing attack to generate a fake signal leading the victim to a target \ac{pvt} without notice. We analytically relate the degrees of the freedom of the attacker in manipulating the victim's solution to both the employed security checks and the number of open signals that can be tampered with by the attacker. We test the performance of the considered attack strategies on an experimental dataset.
%Our paper proves that the CHIMERA/OSNMA cross-check proposed in~\cite{motellaCrossCheck} suffers of some limitations and, therefore, could not be used as a general framework for cross-authentication.
Lastly, we show the limits of the \ac{pvt}-based \ac{gnss} cross-authentication checks, where both authenticated and non-authenticated signals are used.
\end{abstract}

\begin{IEEEkeywords}
\ac{gnss}, \ac{pvt} assurance, physical layer security.
\end{IEEEkeywords}

\sloppy 

\glsresetall
\section{Introduction}\label{sec:introduction}
\Acp{gnss} provide real-time positioning and timing for various civil and military applications. However, \ac{gnss} signals are particularly susceptible to both accidental and intentional interference due to their low received power. Furthermore, civilian \ac{gnss} signal and modulation formats are open to the public \cite{GALICD, GPSNAVSTAR}: therefore, an attacker can compromise location or timing-aware applications by generating fake \ac{gnss} signals, leading the receiver into the calculation of a fake location or time \cite{sanjaICL, spoofingSurvey}. These attacks are referred to as {\em spoofing attacks}. Additionally, several works have reported successful attacks even using off-the-shelf hardware \cite{ceccato18,lenhart21}. 

Thus, service providers proposed several authentication mechanisms. On the system side, authentication is strengthened by incorporating into the broadcast \ac{gnss} signals specific features, which cannot be predicted and generated by the attacker. An authentication-enabled receiver can then interpret these characteristics to distinguish authentic signals from forgeries. The features can be inserted  at two complementary levels: the {\em data level}, i.e, on navigation messages, and the {\em ranging level}, i.e., on pseudoranges between the satellite and receiver. %Indeed, using signals where both data and code are authenticated allows the calculation of a secure \ac{pvt} solution. 

%riassunto OSNMA
\Ac{nma} techniques aim at ensuring the authenticity of the content of the navigation messages. \Ac{osnma} \cite{OSNMAart} is a data authentication function for public Galileo E1B signals, in which the message transmitted by the satellites is interleaved with authentication data, generated through a broadcast authentication protocol called \ac{tesla} \cite{tesla,independentTimeSync,OSNMAart2}. %The \ac{tesla} protocol employs a one-way chain shared by Galileo satellites, with a public root key. The keys in the chain are used in reverse order to generate \acp{mac}. Keys are then shared (always in reverse order) in broadcast mode with a delay of a few seconds. The receiver can verify the \acp{mac} as soon as it becomes aware of the key.
%riassunto cas / acas
However, \ac{nma}-based protocols are vulnerable to \ac{scer} attacks \cite{HumDetStrat,improvingSCER} that estimate the message signature to forge a fake signal in real-time.

Authentication at the ranging level operates on the signal spreading code, de facto authenticating each signal travel time, i.e., the time it takes for the signal to reach the receiver. \Ac{sce} techniques are the most reliable options to limit access to \ac{gnss} signals, as they make the whole spreading code unpredictable. A \ac{sce} solution for Galileo is represented by the \ac{cas}, which will complement \ac{osnma} by offering spreading ranging level authentication in the E6 band. 
The \ac{acas}, recently presented in~\cite{acasICL} and \cite{cas22}, is instead a \ac{sca} method based on the navigation data received and authenticated by \ac{osnma}.  
For GPS, the \ac{chim} scheme, \cite{Scott2003,chimeraION}, aims at jointly authenticating the navigation data and the spreading code of GPS signals for civil usage, implementing both \ac{nma} and \ac{sca}. This solution replaces a small part of the spreading code with a secret cryptographic sequence, which can be subsequently reproduced by the receivers when they become aware of the key. Navigation message data are instead protected by digitally signing most or all the data. 
Indeed, ranging level authentication reduces the effectiveness of \ac{scer}-type attacks, as discussed in~\cite{caparra18}. Thus, in the following, we will refer to signals protected by \ac{sca} (or \ac{sce}) as {\em authenticated signals}, while all the others are {\em open signals}.

Both \ac{acas} and \ac{chim} present several issues. First, data for range verification is only provided to the receivers a posteriori, therefore they do not allow the verification of the current \ac{pvt}. To address this, in~\cite{paperSensors}, the authors proposed a strategy where only \ac{acas} is used to derive a continuous, robust, and secure timing service: the application however is limited to static scenarios. A verification method operating on a continuous GPS signal and using a stochastic reachability-based Kalman filter is proposed in~\cite{mina21}. However, this solution relies only on \ac{chim} and does not take advantage of signals transmitted by other systems or over different frequency bands.

A second issue is that a receiver using only the authenticated signals to derive the \ac{pvt} may reduce the solution {\em accuracy} and {\em availability}. 
To make the \ac{pvt} solution more accurate,  \ac{gnss} receivers typically exploit the signals of all the available frequency bands. 
Thus, a recent trend considers cross-authentication mechanisms, where, on the receiver side, authenticated signals are combined with open non-authenticated signals, and a consistency check is performed on the obtained solution. 

To improve the trustfulness of the \ac{pvt}, \cite{navitec22} discusses a stepwise approach wherein the receiver exploits the authenticated signals as trust anchors for the complete \ac{pvt} solution. In particular, the check between signals on the same band but different carrier frequencies has been included in~\cite{cas22} to verify the consistency between Galileo encrypted E6C and E1B open measurements.  Still, the effectiveness of the proposed approach degrades when including signals from different systems and in urban scenarios.

In \cite{motellaCrossCheck}, the authors proposed a cross-authentication scheme with the goal of enhancing the Galileo ranging level protection, without relying on \ac{cas}, by leveraging on both \ac{chim} and \ac{osnma} single system authentication concepts.  Knowing that the GPS and the Galileo reference times are related by the \ac{ggto} \cite{estGGTO}, they proposed a consistency check between the measured local clock biases and the \ac{ggto} retrieved from the navigation messages. However, we will show that it is possible to attack such a mechanism even by tampering with only the open Galileo signals.

In this paper, we investigate the security of cross-authentication for \ac{pvt}-based checks in \ac{gnss}: in particular, we derive spoofing attack strategies targeting cross-authentication schemes showing that such mechanisms do not guarantee the authenticity of the computed solutions.
More in detail, the main contributions of the paper are the following.
\begin{enumerate}
    \item We introduce a new general framework for the \ac{pvt} cross-authentication checks. In particular, we identify two classes of cross-checks: time-based checks for positioning services and position-based checks for timing. The discussed state-of-the-art mechanisms fit these classes.
    %\item we introduce the general procedure used by a \ac{gnss} receiver to determine the position in three dimensions and the time offset, \hl{da rifrasare/togliere, la pvt non mi sembra novel...}
    \item We introduce the attack strategies that break the cross-authentication checks, and a novel generation time spoofing attack of the signal generation type.
    \item We analytically investigate the limits of the new attack, for instance, relating the degrees of the freedom of the attacker in manipulating the victim's solution to parameters of the transmission/propagation scenario.%both the constraints and the number of open signals used by the victim receiver. 
    \item We analytically evaluate the attack performance considering both a noiseless scenario and a realistic noisy scenario.
    \item We test the attacks by using an experimental dataset collected by a Septentrio PolarRx5 receiver.
\end{enumerate}

The rest of this paper is organized as follows. Section~\ref{sec:systemmodel} introduces the system model. Section~\ref{sec:verModel} provides a general model for the \ac{pvt}-based authentication checks. Then, Section~\ref{sec:attack} describes the proposed attack strategies. Section~\ref{sec:results} presents the numerical results, obtained from the experimental dataset. Finally, Section~\ref{sec:conclusions} provides the conclusions.

\section{System Model}\label{sec:systemmodel}
We consider a multi-constellation receiver acquiring signals from $M>1$ constellations. A subset of the received signals is authenticated, i.e., protected by a \ac{sce} (or \ac{sca}) mechanism. For instance, these may represent a set of Galileo signals protected by \ac{cas} in the E6C band or GPS signals protected by \ac{chim} in L1 C/A. We assume no fault can happen in the authentication assessment of these signals; thus, the attacker cannot generate new signals to replace the authenticated ones.

In general, the victim receiver obtains signals from $N$ visible satellites: $N_\mathrm{A}$ are authenticated while the remaining $N_\mathrm{O}= N - N_\mathrm{A}$ are open (i.e., non-authenticated).
For ease of notation, the authenticated satellites are indexed by $j=1,\hdots, N_\mathrm{A}$, while non-authenticated satellites are indexed by $j=N_\mathrm{A}+1,\hdots, N$. After the general acquisition and tracking procedures (see, e.g., \cite[Ch. 5]{kaplan}), the receiver obtains a vector of pseudoranges measurements $\bm{r} = \left[ r_1, \ldots, r_\mathrm{N}\right]^\mathrm{T}$.

In nominal conditions, the \ac{pvt} solution computed by the victim receiver is 
\begin{equation}\label{eq:pvt1}
    p= [x,\, y, \, z,\, t_1, \,\ldots,\, t_{M}]^\mathrm{T}\,,
\end{equation}
denoting the receiver's three-dimensional position in \ac{ecef} coordinates, and the clock biases, one per each time reference. The procedure to obtain~\eqref{eq:pvt1} is outlined in Appendix~\ref{app:pvt1}.
Indeed, we may consider a single time reference, e.g., one of the \ac{gnss} time references or the \ac{utc}, and relate the others reference, using the respective \ac{isb} \cite[Ch. 21]{teunissen2017}, thus obtaining the \ac{pvt} solution
\begin{equation}\label{eq:pvt2}
    p= [x,\, y, \, z,\, t]^\mathrm{T}\,,
\end{equation}
as outlined in Appendix~\ref{app:pvt2}.
However, some of the considered mechanisms use instead the \ac{isb} to assess the authenticity of the solution, so, in these cases, we will estimate each clock bias separately from the others, as in~\eqref{eq:pvt1}. On the other hand, when the  assessment is based on the receiver position, we will consider the \ac{pvt} solution to be computed as in~\eqref{eq:pvt2}. 

%Notice that we are assuming that both time offsets need to be estimated thus to compute the \ac{pvt} solution we need to acquire $N\geq5$ signals: still, retrieving the inter-system clock bias from other sources, e.g., the \ac{ggto} from Galileo I/NAV navigation message, it is possible to compute one receiver clock bias from the other, thus reducing to 4 the number of needed signals.

\subsection{PVT Solution}\label{sec:pvtcomp} 
In this Section, we briefly describe how \ac{gnss} receivers typically compute the \ac{pvt} solution that will be used in the cross-authentication checks.

At each epoch, the \ac{pvt} solution is computed following an iterative approach, starting from the initial solution $\hat{\bm{p}}~=~[\hat x, \hat y, \hat z, \hat t_1, \ldots, \hat t_{M}]^\mathrm{T}$, or $\hat{\bm{p}}~=~[\hat x, \hat y, \hat z, \hat t]^\mathrm{T}$, depending on whether we aim to get~\eqref{eq:pvt1} or~\eqref{eq:pvt2}, respectively. In particular, $\hat{\bm{p}}$ is the \ac{pvt} computed at the previous epoch or a predefined starting point ({\em cold start}).
The approach follows a linearization procedure, where the \ac{pvt} is related to the range by the geometry matrix $\bm{G}$ (see Appendices~\ref{app:pvt1} and \ref{app:pvt2} for details).
Thus, given the range measurement of the $j$th signal $r_j$ and its estimate $\hat{r}_j$, the {\em range residual} is 
\begin{equation}\label{eq:rangeResiduals}
     \Delta r_j =  r_j -\hat r_j\,.
\end{equation}
Next, observe that the pseudorange displacement $\Delta\bm{r}$ is related to the position displacement $\Delta\bm{p}$ as
\begin{equation}\label{eqn:rGp}
    \Delta\bm{r} = \bm{G}\Delta\bm{p} = \begin{bmatrix}
    \bm{G}_\mathrm{A}\\
    \bm{G}_\mathrm{O}
    \end{bmatrix}\Delta\bm{p}\,,
\end{equation}
where we decomposed the geometry matrix considering the authenticated and the open signal in $\bm{G}_\mathrm{A}$ and $\bm{G}_\mathrm{O}$, respectively. From the previous relation, it follows that $\Delta\bm{p}$ can be computed from $ \Delta\bm{r}$ as 
\begin{equation}\label{eqn:pHr}
    \Delta\bm{p} = (\bm{G}^\mathrm{T}\bm{G})^{-1}\bm{G}^\mathrm{T}\Delta\bm{r} \triangleq \bm{H}\Delta\bm{r}\,,
\end{equation}
where $\bm{H}$ is the {\em Moore-Penrose pseudoinverse} of $\bm{G}$, often called also {\em least-squares solution} matrix.
We remark that the components of matrix $\bm{H}$ depend only on the relative geometry between the user and the $N$ satellites participating in the least-squares computation. %In many applications, the geometry can be considered fixed, and~\eqref{eqn:pHr} yields a linear relationship between the pseudorange corrections $\Delta\bm{r}$ and the induced position and time bias displacements $\Delta\bm{p}$.
Finally the position is updated as $\hat{\bm{p}}' = \hat{\bm{p}} +  \Delta\bm{p}$, where $\hat{\bm{p}}'$ is the starting point for the next iteration.

The iterative procedure continues until either i) the maximum number of iterations is reached or ii) the solution update is smaller than a user-defined step, i.e., $\|\Delta\bm{p}\|<\epsilon$.
More details about the procedure and the derivation of each term are reported in the Appendices. 

\subsection{Attacker Model} \label{sec:attackerModel}
\begin{comment}
\begin{table}[t]
\centering 
    \caption{Attacks viable on authenticated and open signals. }\label{tab:attacks}
\begin{tabular}{c|cc}
\toprule
Attack          & Authenticated signals (A) & Open signals (O) \\ \midrule
Generation      &    \xmark                       &  \cmark                     \\
Selective delay &      \xmark                     &  \cmark                     \\
Relay          & \cmark                         &  \cmark                     \\ \bottomrule
\end{tabular}
\end{table}
\end{comment}
In this Section, we detail the attacker's capabilities and operations. We assume that the message data (e.g., atmospheric corrections and ephemeris) are retrieved from a publicly-available authenticated side channel. Additionally, we suppose that the actual victim receiver position is known by the attacker.
We take into consideration two types of attacks:
\begin{description}
\item[\textbf{Generation attack}:] This attack consists of the generation and transmission of fake \ac{gnss} signals, corresponding possibly different \ac{pvt} solutions than the one computed using the legitimate signals. %To carry out a generation attack, the attacker must know both the entire navigation message and the spreading code.
\item[\textbf{Relay attack}:] This attack consists of sampling, recording, and relaying the entire modulated signal with unchanged code and data. This attack is sometimes referred to as meaconing.
\end{description}

%\color{blue}{Indeed, the attacker cannot perform a generation attack when some unpredictable elements are introduced in the signal, e.g., in the navigation data by \ac{osnma} or in the spreading code by \ac{sca} or \ac{sce}. Moreover, \ac{sce} mechanisms do not allow even \textit{selective delay} attack, where each satellite signal is recorded and retransmitted with suitable delay to the receiver, since it is not possible to process the signals separately.}
We remark that the attacker cannot perform a {generation attack} on authenticated signals, which are protected by \ac{sca} (or \ac{sce}), because he has no knowledge of the actual spreading code. 
The only action the attacker can perform on the authenticated signals is a relay attack. %However, in this case, the spoofing signal transmitted to the victim receiver will contain also the attacker’s receiver noise. 
On the other hand, the attacker can generate, relay, or selectively delay the open signals, since, in this case, both the spreading code and the message content are entirely public.
%A summary on the attacks viable on both authenticated and opens signal is given in Tab.~\ref{tab:attacks}.

We assume that the victim under attack acquires and locks onto the forged signals transmitted by the attacker. This is achieved, for instance, by increasing the transmitting power of the forged signals or canceling the legitimate open ones.

The attacker aims at inducing the victim receiver to obtain as \ac{pvt} some target position $\bm{p}^\star$ instead of the authentic solution $\bm{p}$. 
To achieve this goal, he tampers the signals propagation times, altering the range residuals $\Delta{\bm{r}}$ by a term $\Delta{\bm{r}_\mathrm{T}}$, such that, when under attack, \eqref{eqn:rGp} and~\eqref{eqn:pHr} become respectively
\begin{align}\label{eq:range_var_under_attack}
     \Delta\widetilde{\bm{r}} \triangleq \Delta\bm{r}+\Delta\bm{r}_\mathrm{T} &=\bm{G}(\Delta\bm{p}+\Delta\bm{p}_\mathrm{T}) \,, \\ %\Delta\widetilde{\bm{r}} = \bm{G}\Delta\widetilde{\bm{p}},\label{eqn:deltartilde}\\
    \Delta\widetilde{\bm{p}} \triangleq \Delta\bm{p}+\Delta\bm{p}_\mathrm{T} &=\bm{H}\Delta\widetilde{\bm{r}} \,. %\Delta\widetilde{\bm{p}}=\bm{H}\Delta\widetilde{\bm{r}}.
    \label{eqn:deltaptilde}
\end{align}
This means that the range tampering $\Delta\bm{r}_\mathrm{T}$ causes a displacement on the \ac{pvt} solution $\Delta\bm{p}_\mathrm{T}$. So, the \ac{pvt} update due to the attacker spoofing will be 
\begin{equation}\label{eq:shift_att}
    \widetilde{\bm{p}} =  \hat{\bm{p}} + \Delta\widetilde{\bm{p}} =  \hat{\bm{p}} + \Delta\bm{p}+\Delta\bm{p}_\mathrm{T}\, .
\end{equation}
We remark that the iterative least squares approach to compute the \ac{pvt} when under attack is approximated by a linear transformation only when the pseudorange displacement $\Delta\bm{r}_\mathrm{T}$ introduced by the attacker is small with respect to $\bm{r}$.
In Section~\ref{sec:results}, we will test the effectiveness of the attack by exploiting the linearization procedure and evaluating the performance as a function of the distance between the target and the real position, i.e., the magnitude of the pseudorange displacements to be induced, thus testing the robustness of the linear approximation.

\begin{figure}
    \centering
    \tikzset{every picture/.style={line width=0.75pt}} %set default line width to 0.75pt        

\begin{tikzpicture}[x=0.75pt,y=0.75pt,yscale=-1,xscale=1]
%uncomment if require: \path (0,128); %set diagram left start at 0, and has height of 128
\tikzstyle{every node}=[font=\small]

%Straight Lines [id:da5988207065077622] 
\draw [color={rgb, 255:red, 208; green, 2; blue, 27 }  ,draw opacity=1 ]   (37.66,36.54) -- (196.15,28.63) ;
\draw [shift={(34.67,36.69)}, rotate = 357.14] [fill={rgb, 255:red, 208; green, 2; blue, 27 }  ,fill opacity=1 ][line width=0.08]  [draw opacity=0] (3.57,-1.72) -- (0,0) -- (3.57,1.72) -- cycle    ;
%Straight Lines [id:da3084569005733886] 
\draw    (204.47,87.48) -- (193.19,28.63) ;
\draw [shift={(205.04,90.42)}, rotate = 259.14] [fill={rgb, 255:red, 0; green, 0; blue, 0 }  ][line width=0.08]  [draw opacity=0] (3.57,-1.72) -- (0,0) -- (3.57,1.72) -- cycle    ;
%Straight Lines [id:da6521226344392108] 
\draw [color={rgb, 255:red, 208; green, 2; blue, 27 }  ,draw opacity=1 ] [dash pattern={on 4.5pt off 4.5pt}]  (196.15,28.63) -- (59.85,62.21) ;
%Straight Lines [id:da03587577725866198] 
\draw    (64.27,63.5) -- (205.04,93.11) ;
\draw [shift={(61.33,62.89)}, rotate = 11.88] [fill={rgb, 255:red, 0; green, 0; blue, 0 }  ][line width=0.08]  [draw opacity=0] (3.57,-1.72) -- (0,0) -- (3.57,1.72) -- cycle    ;
%Shape: Ellipse [id:dp562151867774215] 
\draw  [color={rgb, 255:red, 208; green, 2; blue, 27 }  ,draw opacity=1 ][fill={rgb, 255:red, 208; green, 2; blue, 27 }  ,fill opacity=1 ] (29.93,36.69) .. controls (29.93,35.5) and (30.99,34.54) .. (32.3,34.54) .. controls (33.61,34.54) and (34.67,35.5) .. (34.67,36.69) .. controls (34.67,37.88) and (33.61,38.84) .. (32.3,38.84) .. controls (30.99,38.84) and (29.93,37.88) .. (29.93,36.69) -- cycle ;
%Shape: Ellipse [id:dp5933326789089162] 
\draw  [fill={rgb, 255:red, 255; green, 255; blue, 255 }  ,fill opacity=1 ] (190.22,28.63) .. controls (190.22,27.15) and (191.55,25.94) .. (193.19,25.94) .. controls (194.82,25.94) and (196.15,27.15) .. (196.15,28.63) .. controls (196.15,30.11) and (194.82,31.32) .. (193.19,31.32) .. controls (191.55,31.32) and (190.22,30.11) .. (190.22,28.63) -- cycle ;
%Shape: Ellipse [id:dp6843080124818788] 
\draw  [fill={rgb, 255:red, 0; green, 0; blue, 0 }  ,fill opacity=1 ] (202.07,93.11) .. controls (202.07,91.63) and (203.4,90.42) .. (205.04,90.42) .. controls (206.67,90.42) and (208,91.63) .. (208,93.11) .. controls (208,94.6) and (206.67,95.8) .. (205.04,95.8) .. controls (203.4,95.8) and (202.07,94.6) .. (202.07,93.11) -- cycle ;
%Shape: Ellipse [id:dp29156646342249437] 
\draw  [fill={rgb, 255:red, 208; green, 2; blue, 27 }  ,fill opacity=1 ] (56.89,62.21) .. controls (56.89,60.73) and (58.22,59.53) .. (59.85,59.53) .. controls (61.49,59.53) and (62.81,60.73) .. (62.81,62.21) .. controls (62.81,63.7) and (61.49,64.9) .. (59.85,64.9) .. controls (58.22,64.9) and (56.89,63.7) .. (56.89,62.21) -- cycle ;

% Text Node
\draw (92,15) node [anchor=north west][inner sep=0.75pt]  [color={rgb, 255:red, 208; green, 2; blue, 27 }  ,opacity=1 ] [align=left] {$\Delta\bm{p}^\star$};
\draw (21,18) node [anchor=north west][inner sep=0.75pt]  [color={rgb, 255:red, 208; green, 2; blue, 27 }  ,opacity=1 ] [align=left] {$\bm{p}^\star$};
\draw (200,15) node [anchor=north west][inner sep=0.75pt]   [align=left] {$\hat{\bm{p}}$};
\draw (212,88) node [anchor=north west][inner sep=0.75pt]   [align=left] {$\bm{p}$};
\draw (90,75) node [anchor=north west][inner sep=0.75pt]   [align=left] {$\Delta\bm{p}_\mathrm{T}$};
\draw (135,42) node [anchor=north west][inner sep=0.75pt]   [align=left] {$\Delta\widetilde{\bm{p}}$};
\draw (200,50) node [anchor=north west][inner sep=0.75pt]   [align=left] {$\Delta{\bm{p}}$};
\draw (40,60) node [anchor=north west][inner sep=0.75pt]  [color={rgb, 255:red, 208; green, 2; blue, 27 }  ,opacity=1 ] [align=left] {$\Tilde{\bm{p}}$};

\end{tikzpicture}
    \caption{Example of the scenario under attack: $\hat{\bm{p}}$ is the solution at the iteration start, $\bm{p}$ is the legitimate solution, ${\bm{p}}^\star$ is the attacker target solution, and $\tilde{\bm{p}}$ is the solution computed by the victim, mixing open and tampered signals.}
    \label{fig:distances}
\end{figure}
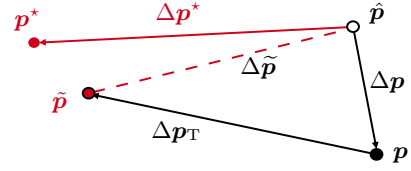

%%%%%% 
Fig.~\ref{fig:distances} shows the displacement of the attacker \ac{pvt} due to the attacker tampering: in legitimate conditions, starting from $\hat{\bm{p}}$, the receiver would compute $\Delta \bm{p}$ to obtain the legitimate solution $\bm{p}$ (in black). In turn, the attacker wants to lead the receiver to the target $\bm{p}^\star$ (in red), thus he tampers the signals, causing the alteration $\Delta\bm{p}_\mathrm{T}$. In this case, instead of the true \ac{pvt}, the victim computes $\tilde{\bm{p}}$, as close as possible to the target $\bm{p}^\star$.
Notice that Fig.~\ref{fig:distances} may represent the actual 3D position or the whole \ac{pvt} vector, i.e., considering points in the $3 + M$ space.
%We remark that since the attacker knows the victim's position and the satellite position is deterministic, it also knows a priori the signals processed by the victim.

For the sake of clarity, we will distinguish between authenticated and non-authenticated ranges residuals, $ \Delta\bm{r}_\mathrm{T,A}$ and $\Delta\bm{r}_\mathrm{T,O}$, that the attacker introduces to the authenticated and non authenticated signals, respectively 
\begin{equation}
    \Delta\bm{r}_\mathrm{T} =
    \begin{bmatrix}
    \Delta\bm{r}_\mathrm{T,A}\\
    \Delta\bm{r}_\mathrm{T,O}
    \end{bmatrix}.
\end{equation}
%We remark that, in general, we assumed that any tampering to the authenticated signals is always detected by the receiver. 
%Thus, as better detailed in the next Sections, the attacker has to choose $\Delta\bm{r}_\mathrm{T}$ such that $\Delta\bm{r}_\mathrm{T,A} =  0$ to stay undetected.
Finally, it is natural to partition the least square matrix as $\bm{H} = [\bm{H}_\mathrm{A} \, \bm{H}_\mathrm{O} ]$, where $\bm{H}_\mathrm{A}$ and $\bm{H}_\mathrm{O}$ represent the matrix associated to the authenticated and the open signals, respectively. In particular, since the attacker can only perform a relay attack on the authenticated signals, we model the tampering of the attacker on the authenticated ranges as $\Delta\bm{r}_\mathrm{T,A} = k\bm{1}$, where $\bm{1}$ is the column vector filled with all ones. Indeed, when performing a generation attack, the attacker sets $k =0$, i.e., only the open signals are actually fake.

%So, with $\widetilde{\Delta\bm{p}} = \Delta\bm{p}+\Delta\bm{p}_\mathrm{T}$ and the associated pseudorange measurements are $\hat{\bm{r}}+\widetilde{\Delta\bm{r}}$ with $\widetilde{\Delta\bm{r}} = \Delta\bm{r}+\Delta\bm{r}_\mathrm{T}$.

\section{PVT-based Cross-Authentication Checks Models}\label{sec:verModel} 
The receiver computes the \ac{pvt} solution using both authenticated and non-authenticated ranges. Next, to assess its authenticity, it performs one or more consistency checks on the \ac{pvt} (or in part of it). 
We use decision theory to describe the problem and consider the two hypotheses 
\begin{align*}
    &\mathcal{H}_0 :\;   \mbox{PVT is legitimate, signals are legitimate,}\\
    &\mathcal{H}_1 :\;   \mbox{PVT is fake, signals have been tampered with.}
\end{align*}
The consistency check on the \ac{pvt} provides one of the two decision 
\begin{align*}
    &\hat{\mathcal{H}}_0 :\;   \mbox{Check passed, PVT is legitimate}\\
    &\hat{\mathcal{H}}_1 :\;   \mbox{Check failed, PVT is not legitimate.}
\end{align*}

This allows us to define the {\em false alarm} and {\em missed detection} probabilities as 
\begin{subequations}
    \begin{equation}
        p_\mathrm{FA} = P(\hat{\mathcal{H}}_1 | {\mathcal{H}}_0)\,
    \end{equation}
    \begin{equation}
        p_\mathrm{MD} = P(\hat{\mathcal{H}}_0 | {\mathcal{H}}_1)\,.
    \end{equation}
\end{subequations}   

In the next Sections, we describe the classes of consistency checks to authenticate the \ac{pvt}, also related to the required service.

\subsection{Time-Based Checks for Navigation Services}\label{sec:TimingNavi}
    \begin{figure}
        \centering
        \includegraphics[width=\columnwidth]{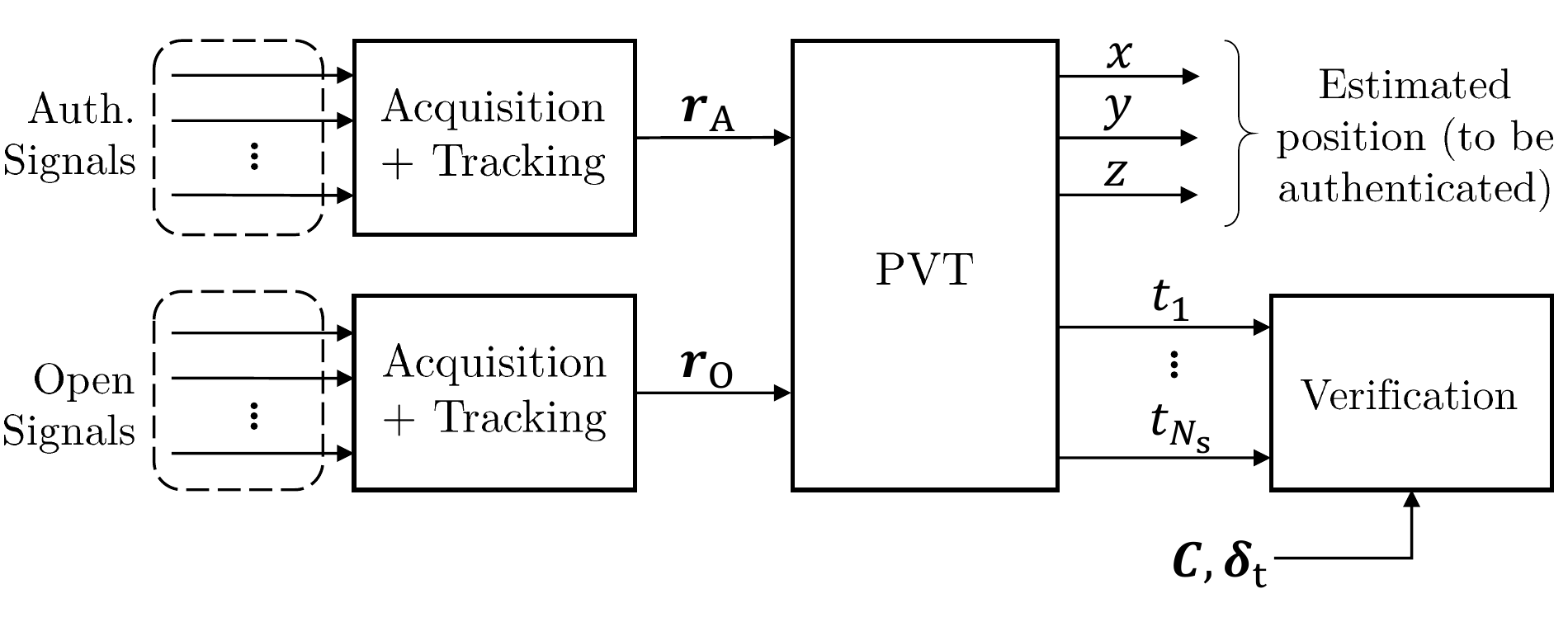}
        \caption{Block scheme of the time-based check.}
        \label{fig:time_check}
    \end{figure}
The receiver collects signals from $M$ different \acp{gnss}, with the \ac{pvt} solution containing $N_{\rm s}$ clock biases, one per \ac{gnss} time reference, thus, $\bm{p}$ is computed as in Appendix~\ref{app:pvt1}. The time references are related by the \ac{isb}, thus the receiver considers the \ac{pvt} to be authentic if 
\begin{equation}\label{eq:generalCheck}
    \bm{\theta}_\mathrm{t} \triangleq \bm{C} \bm{p}   \leq \bm{\delta}_\mathrm{t}\,,
\end{equation}
    where $\bm{C}$ is a matrix that selects and relates a pair of clock biases (with the appropriate sign), while $\bm{\delta}_\mathrm{t}$ contains an upper bound of the expected \ac{isb}, eventually tuned to meet a predefined false alarm. The mechanism for time-based checks is summarized in Fig.~\ref{fig:time_check}.
    Matrix $\bm{C}$ has $N_\mathrm{c} = 2 \left( M -1\right)$ rows, with $\bm{\delta}_\mathrm{t}$ containing $M -1$  upper-bounds on the measured \ac{isb}. Therefore, $\bm{C}$ has $M-1$ linearly independent rows. The $i$-th row of $\bm C$ is $\bm{c}_i$: this will be associated to the $i$-th element of $\bm \delta_\mathrm{t}$, $\delta_{\mathrm{t},i}$, such that the $i$-th check will be $\theta_{\mathrm{t},i} = \bm{c}_i \bm{H} \bm{r} \leq\delta_{\mathrm{t},i} $. For instance, in~\cite{motellaCrossCheck} authors propose to authenticate the computed \ac{pvt} solution through a time consistency check. In particular, they consider the \ac{pvt} solution obtained by combining the signals from two constellations, one transmitting authenticated signals and the other transmitting open signals, following the procedure outlined in Appendix~\ref{app:pvt1}. Then, denoting as $t_\mathrm{1}$ and $t_\mathrm{2}$ the clock biases associated with the time reference of each constellation, the \ac{pvt} is authenticated by checking, at the receiver, if the difference $|t_1-t_2|$ is consistent with the reference \ac{isb}, as
\begin{equation}\label{eqn:test}
    \left|(t_\mathrm{2}-t_\mathrm{1}) - b-\mathrm{ISB}\right|<T\;,
\end{equation}
where $T$ is a threshold that is set by the receiver, and $b$ is a bias set during the calibration phase. More details about the security check and the parameters' derivation are reported in~\cite{motellaCrossCheck}. %According to \cite{motellaCrossCheck}, $T$ is set in order to obtain a certain false alarm probability $P_\mathrm{FA}$ as $T=\sqrt{2}\sigma_\mathrm{G}\cdot\mbox{erfc}^{-1}(P_\mathrm{FA})$, where $\sigma_\mathrm{G}$ is the standard deviation of the experimental distribution of $(t_\mathrm{E}-t_\mathrm{G})$.
%As described in~\cite{estGGTO}, this bias must take into account the fact that the transmitted parameter only adjusts for timing discrepancies that originate at the satellite, whereas the estimate may also account for other inter-system biases introduced by the receiver, such as delays due to signal processing. Therefore, it is essential to calibrate the threshold $T$ in~\eqref{eqn:test} taking into account the particular receiver in use. Therefore, in~\cite{motellaCrossCheck} a calibration procedure is proposed to properly set both threshold $T$, and bias $b_\mathrm{G}$, % , which can be repeated for any specific combination of receiver/front-end in use. 
%for instance to met a predefined false alarm probability $P_\mathrm{FA}$.\hl{Serve che diciamo tutto?}
%as
%\begin{equation}
 %   T=\sqrt{2}\sigma_\mathrm{G}\cdot\mbox{erfc}^{-1}(P_\mathrm{FA}),
%\end{equation}
%where $\sigma_\mathrm{G}$ is the standard deviation of the experimental distribution of $t_\mathrm{E}-t_\mathrm{G}$.
The check~\eqref{eqn:test} can be framed in~\eqref{eq:generalCheck}, by choosing 
\begin{equation}\label{eq:modelMotella}
    \bm{C} =     
    \begin{bmatrix}
     0 & 0 &  0 &  1 & -1 \\
     0 & 0 & 0  & -1 & 1 \\
    \end{bmatrix}\,, 
    \quad \bm{\delta}_\mathrm{t} = 
    \begin{bmatrix}
        T + b +\mathrm{ISB} \\
        T - b-\mathrm{ISB}        
    \end{bmatrix}.
    \end{equation}
This means that the attacker must craft a $\Delta\bm{r}_\mathrm{T}$ that changes the estimated time offset by at most $\bm{\delta_\mathrm{t}}$. In the rest of the paper, we will refer to this scheme as \textit{\ac{isb}-based cross-authentication check}.

\begin{comment}
Fig.~\ref{fig:blockscheme} presents a block scheme of the verification check proposed in~\cite{motellaCrossCheck}.
\begin{figure}
    \centering
    \includegraphics[width=\columnwidth]{}
    \caption{Block scheme of the verification scheme proposed in~\cite{motellaCrossCheck}.}
    \label{fig:blockscheme}
\end{figure}
\end{comment}

\subsection{Position-Based Checks for Timing Services}\label{sec:positionChecks}
    \begin{figure}
        \centering
        \includegraphics[width=\columnwidth]{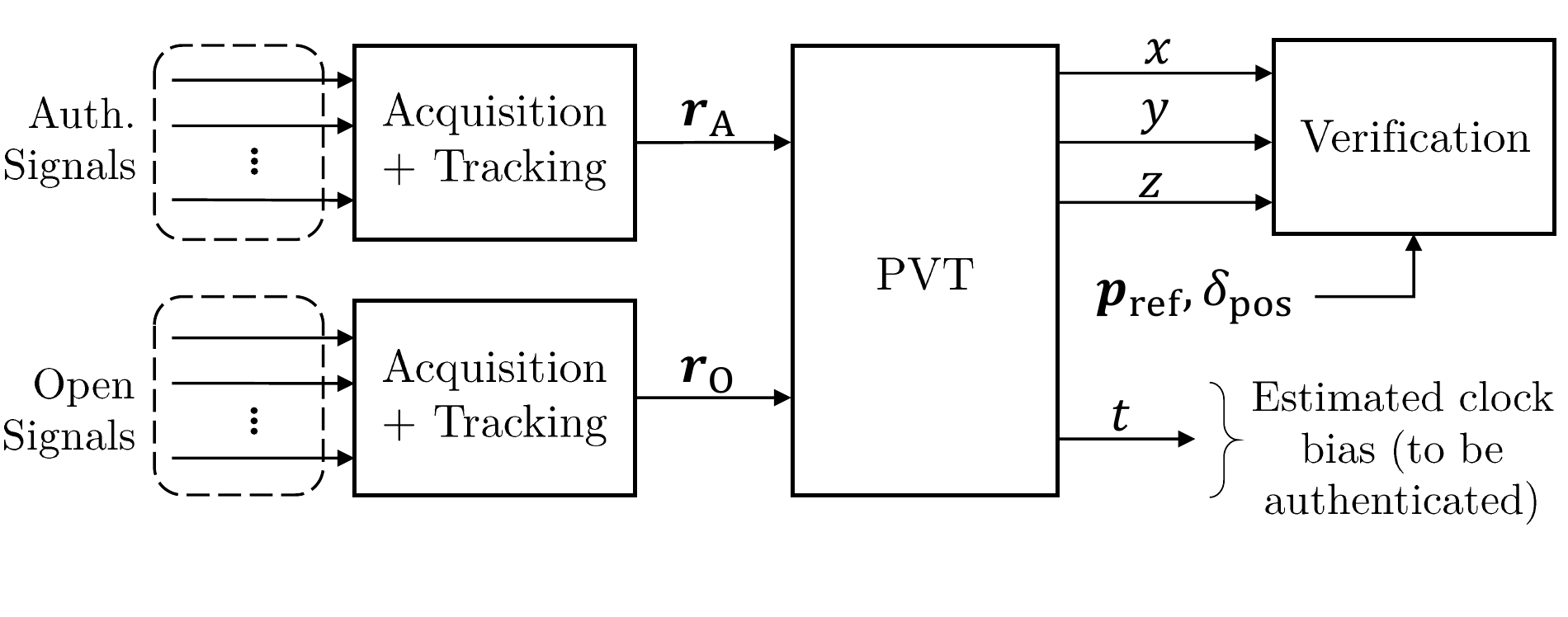}
        \caption{Block scheme of the position-based check.}
        \label{fig:pos_check}
    \end{figure}
The goal here is to provide a secure \ac{gnss}-based timing source. We assume the receiver knows a priory its position, whose coordinates are denoted by $\bm{p}_\mathrm{ref}=[x_\mathrm{ref}, y_\mathrm{ref}, z_\mathrm{ref}]$: these may be obtained either by computing the \ac{pvt} using only authenticated ranges, received by a side channel, or it can be simply a priory (e.g., in a static scenario). However, none of these side channels provides a reliable timing correction. Hence, the receiver exploits all the available measurements, both authenticated and open, and obtains the \ac{pvt} solution $\bm{p} =[x,y,z,t]$, containing the desired clock bias correction, as outlined in Appendix~\ref{app:pvt2}. To verify its authenticity, the receiver verifies that the reconstructed position is consistent with its a-priori knowledge of the position, verifying that 
\begin{multline}\label{eq:testTiming}
{\theta}_\mathrm{pos} \triangleq \sqrt{\left(x_\mathrm{ref} - x \right)^2 + \left(y_\mathrm{ref} - y \right)^2 +\left(z_\mathrm{ref} - z \right)^2 }\\ = \sqrt{\bm{\epsilon^\mathrm{T} \epsilon}}
 \leq \delta_\mathrm{pos}  \,.
\end{multline}
where $\bm \epsilon \triangleq [x_\mathrm{ref}-x, y_\mathrm{ref}-y, z_\mathrm{ref}-z]$ represents the 3D position displacement with respect to the reference position.
The position-based check is summarized in Fig.~\ref{fig:pos_check}.
If the check passes, i.e., the position computed using all the signals is coherent with the reference one, the solution is considered safe and the receiver uses the timing correction from $\bm{p}$ to correct the local clock. An example of such a setting is described in~\cite{paperSensors}.

\section{Proposed Attack Strategies}\label{sec:attack}
In this Section, we outline the proposed strategies that the attacker uses to alter the \ac{pvt} solution without alerting the victim receiver. First, we focus on the ideal noiseless scenario, where the received signals of neither the attacker nor the victim are affected by noise. We will generalize the attack for a more realistic scenario in Section~\ref{sec:noise}.

To be successful the attacker must i) alter only the non-protected signals, and choose the $\Delta \bm{r}_\mathrm{T}$ such that ii) $\Delta\bm{ \tilde{p}}$ passes the check employed by the victim, and iii) leads the victim to a \ac{pvt} solution close to the target $\bm{p}^\star$.
In this first phase, we assume that the victim receives the signals as intended by the attacker. Moreover, we consider that the attacker knows the pseudorange vector $\bm{r}$ that the victim would measure in the nominal scenario.
%The results obtained with these assumptions represent a bound on the security performance obtained in a setting very favorable for the attacker
%attacker knows the legitimate set of measurements $\bm{r}$ and can pre-compensate the channel noise between the attacker and the victim receiver, thus the latter receives the signals as intended by the attacker. 
In Section~\ref{sec:noise}, we will analyze the scenario wherein the attacker cannot pre-compensate the channel, and the victim receives a noisier copy of the tampered signal.

%We consider the iterative \ac{pvt} computation described in Section~\ref{sec:pvtcomp}. Thus, starting from the previous estimate $\hat{\bm{p}}$, when under attack, the receiver updates the position using $\Delta \widetilde{\bm{p}}$, associated with the ranges residuals $\Delta \bm{\widetilde r}$, as in~\eqref{eq:shift_att}.
In the following Sections, we describe the attacking strategies for the timing and position-based checks.

\subsection{Generation Attack Against Time-Based Checks}\label{sec:genatt_VS_TimeCheck}
We consider a legitimate receiver employing the check described in Section~\ref{sec:TimingNavi}, where the \ac{pvt} solution is considered to be authentic and the clock bias differences are coherent with the \acp{isb} derived from a side channel.

First, we identify the set of {\em feasible solutions} that do not raise an alarm at the receiver. Next, we pick the solution that leads the receiver estimated position close to the target position $\bm{p}^\star$.

To identify the set of feasible solutions, we have the following result.
\begin{theorem} Given a constraint vector $\bm{\delta}'$, the set of feasible solutions for the attacker containing all the range alterations that do not raise an alarm in~\eqref{eq:generalCheck} is the affine space
\begin{equation}\label{eq:ammissibleSolution}
    S =  \left\{ \Delta\bm{r}_\mathrm{p} + s\,,  s \in \mathcal{N}(\bm{CH}_\mathrm{O}) \right\} \,,
\end{equation}
where $\Delta\bm{r}_\mathrm{p}$ is a solution of $\bm{\delta}' = \bm{CH}_\mathrm{O}\Delta\bm{r}_\mathrm{T,O}$, and $\mathcal{N}(\bm{X})$ is the null space of $\bm{X}$.
Moreover, it holds 
\begin{equation}\label{eq:spaceSize}
     \mathrm{dim}({S}) =  N_\mathrm{O} -  M +1\;.
\end{equation}
\end{theorem}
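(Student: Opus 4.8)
The plan is to collapse the non-alarm condition into a single linear system in the open-range tampering vector $\Delta\bm{r}_\mathrm{T,O}$, and then to read off both the affine structure and the dimension from elementary linear algebra, isolating the rank of $\bm{C}\bm{H}_\mathrm{O}$ as the only nontrivial ingredient. First I would rewrite the check in the attacker's variables. Under a generation attack the authenticated ranges are untouched, so $k=0$ and $\Delta\bm{r}_\mathrm{T,A}=\bm{0}$; combining~\eqref{eqn:deltaptilde} with the column partition $\bm{H}=[\bm{H}_\mathrm{A}\,\bm{H}_\mathrm{O}]$ gives, by linearity, $\Delta\bm{p}_\mathrm{T}=\bm{H}_\mathrm{O}\Delta\bm{r}_\mathrm{T,O}$. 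Hence the attacker-induced shift of the statistic $\bm{\theta}_\mathrm{t}=\bm{C}\bm{p}$ of~\eqref{eq:generalCheck} is $\bm{C}\Delta\bm{p}_\mathrm{T}=\bm{C}\bm{H}_\mathrm{O}\Delta\bm{r}_\mathrm{T,O}$, and fixing the desired offset $\bm{\delta}'$ (a value kept within the bound $\bm{\delta}_\mathrm{t}$) turns ``not raising an alarm'' into the linear equation $\bm{C}\bm{H}_\mathrm{O}\Delta\bm{r}_\mathrm{T,O}=\bm{\delta}'$.

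For the affine structure I would invoke the superposition principle for linear systems: assuming $\bm{\delta}'$ lies in the range of $\bm{C}\bm{H}_\mathrm{O}$ so the system is consistent, every solution is a fixed particular solution $\Delta\bm{r}_\mathrm{p}$ plus an arbitrary element of the homogeneous solution set $\mathcal{N}(\bm{C}\bm{H}_\mathrm{O})$, and conversely every such sum solves the system since $\bm{C}\bm{H}_\mathrm{O}(\Delta\bm{r}_\mathrm{p}+s)=\bm{\delta}'+\bm{0}$. This is exactly the coset $S$ of~\eqref{eq:ammissibleSolution}, an affine space with $\mathrm{dim}(S)=\mathrm{dim}\,\mathcal{N}(\bm{C}\bm{H}_\mathrm{O})$. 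Applying the rank--nullity theorem to $\bm{C}\bm{H}_\mathrm{O}$, which acts on the $N_\mathrm{O}$-dimensional space of open tamperings, yields $\mathrm{dim}(S)=N_\mathrm{O}-\mathrm{rank}(\bm{C}\bm{H}_\mathrm{O})$, so the whole claim~\eqref{eq:spaceSize} reduces to proving $\mathrm{rank}(\bm{C}\bm{H}_\mathrm{O})=M-1$. The upper bound is immediate from submultiplicativity of rank together with the fact, noted after~\eqref{eq:generalCheck}, that $\bm{C}$ has exactly $M-1$ linearly independent rows, whence $\mathrm{rank}(\bm{C}\bm{H}_\mathrm{O})\le\mathrm{rank}(\bm{C})=M-1$.

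The matching lower bound is where I expect the real difficulty to lie, because $\bm{H}_\mathrm{O}$ in general does \emph{not} have full row rank: when the open signals do not by themselves span all $M$ clock references, e.g. the two-constellation setting of~\eqref{eq:modelMotella} where $\bm{G}_\mathrm{O}$ carries a zero column for the authenticated time reference, one cannot simply claim that right-multiplication by $\bm{H}_\mathrm{O}$ preserves the rank of $\bm{C}$. Instead I would pass to the transpose, $\mathrm{rank}(\bm{C}\bm{H}_\mathrm{O})=\mathrm{rank}(\bm{H}_\mathrm{O}^\mathrm{T}\bm{C}^\mathrm{T})=(M-1)-\mathrm{dim}\big(\mathrm{rowspace}(\bm{C})\cap\ker(\bm{H}_\mathrm{O}^\mathrm{T})\big)$, and show that the intersection is trivial. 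Here $\mathrm{rowspace}(\bm{C})$ is the $(M-1)$-dimensional space of clock-bias-difference functionals (vanishing on the position coordinates and summing to zero over the clocks), while $\ker(\bm{H}_\mathrm{O}^\mathrm{T})=\mathrm{range}(\bm{H}_\mathrm{O})^\perp$ with $\mathrm{range}(\bm{H}_\mathrm{O})=(\bm{G}^\mathrm{T}\bm{G})^{-1}\,\mathrm{rowspace}(\bm{G}_\mathrm{O})$.

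Unwinding the orthogonality, a nonzero $\bm{w}\in\mathrm{rowspace}(\bm{C})$ lying in $\ker(\bm{H}_\mathrm{O}^\mathrm{T})$ would force $\bm{G}_\mathrm{O}(\bm{G}^\mathrm{T}\bm{G})^{-1}\bm{w}=\bm{0}$, i.e. a clock-difference functional that the open line-of-sight/clock rows cannot perturb at all; I would argue this cannot happen under the standing non-degeneracy of the geometry, since tampering the open ranges moves each checked clock difference nontrivially. This rules out a rank drop and establishes $\mathrm{rank}(\bm{C}\bm{H}_\mathrm{O})=M-1$, giving $\mathrm{dim}(S)=N_\mathrm{O}-M+1$ and closing the proof. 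The main obstacle is precisely this last non-degeneracy step: the clean rank preservation that would follow from full row rank of $\bm{H}_\mathrm{O}$ is unavailable, so equality must be argued from the interplay between the clock-difference row space of $\bm{C}$ and the open-signal geometry rather than from a generic rank inequality.
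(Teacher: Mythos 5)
Your proposal follows the same route as the paper: reduce the no-alarm condition to the consistent linear system $\bm{C}\bm{H}_\mathrm{O}\Delta\bm{r}_\mathrm{T,O}=\bm{\delta}'$, identify its solution set as the coset $\Delta\bm{r}_\mathrm{p}+\mathcal{N}(\bm{C}\bm{H}_\mathrm{O})$, and apply rank--nullity so that everything hinges on $\mathrm{rank}(\bm{C}\bm{H}_\mathrm{O})=M-1$. The only divergence is that you work harder on the lower bound for that rank via the transpose/intersection argument, whereas the paper simply asserts it from the $M-1$ independent rows of $\bm{C}$ and relegates the needed geometric non-degeneracy to a footnoted tacit assumption --- the very assumption you, like the paper, ultimately have to invoke.
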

\begin{proof}

Now, condition~\eqref{eq:generalCheck} requires
\begin{equation}\label{eq:Cpdelta}
    \bm{C}\widetilde{\bm{p}} = \bm{C\hat{p}}+\bm{C} \Delta\widetilde{\bm{p}} = \bm{C\hat{p}}+\bm{C}\Delta\bm{p}+ \bm{C}\Delta\bm{p}_\mathrm{T}  \leq \bm{\delta}_\mathrm{t} \,.
\end{equation}
We assume that, in the nominal scenario, the check~\eqref{eq:generalCheck} is passed, thus, $ \bm{C\hat{p}}+\bm{C}\Delta\bm{p} = \bm{\delta}'' \leq \bm{\delta}_\mathrm{t}$: for instance, this models also the case where the defender's measurements are affected by noise. So, to avoid raising an alarm, the attacker can choose any alteration that induce a change smaller than $\bm{\delta}_\mathrm{t} -\bm{\delta}''$, i.e., from \ref{eq:Cpdelta} the shift induced by the attacker is 
\begin{equation}\label{eq:Cpdelta2}
     \bm{C}\Delta\bm{p}_\mathrm{T}  \triangleq \bm{\delta}'\,,\; \; \bm{\delta}' \leq\bm{\delta}_\mathrm{t} -\bm{\delta}''\,.
\end{equation}

By using~\eqref{eqn:pHr} and considering that only non-protected signals can be manipulated by the attacker, we have 
\begin{equation}\begin{split}  \label{eq:cond_delta'}
         \bm{\delta}' = \bm{C} \Delta\bm{p}_\mathrm{T}  =  \bm{C} \bm{H} \Delta \bm{r}_\mathrm{T} = \bm{C} \begin{bmatrix}
            \bm{H}_\mathrm{A} & \bm{H}_\mathrm{O} 
        \end{bmatrix} 
    \begin{bmatrix}
            \bm{0}\\
            \Delta\bm{r}_\mathrm{T,O}
    \end{bmatrix} \\ = 
         \bm{C} \bm{H}_\mathrm{O}  \Delta\bm{r}_\mathrm{T,O}\;.
\end{split}
\end{equation}
Being $\bm{\delta}'$ under the control of the attacker, he can choose it such that $\bm{\delta}'\in \mathcal{R}(\bm{CH}_\mathrm{O})$, so that~\eqref{eq:cond_delta'} has at least one solution.
Indeed, $\mathrm{rank}(\bm{CH}_\mathrm{O}) = M-1$, since $\bm{C}$ has $ M-1$ linearly independent rows. \footnote{We made the tacit assumption that $N_\mathrm{O}\geq M$ and that the geometry are such that $\bm{C}\bm{H}_\mathrm{O}$ is full row rank.}
The matrix resulting from the product $\bm{CH}_\mathrm{O}$ has size $N_\mathrm{c}\times N_\mathrm{O}$. 

Hence, by the Rouché-Capelli's theorem,
%we can use the Rouché-Capelli's theorem: so for a given threshold vector $\bm{\delta}'$, 
\eqref{eq:cond_delta'} describes a linear undetermined system that admits infinitely many solutions, belonging to the affine space $S$ as defined in~\eqref{eq:ammissibleSolution}, where $\Delta\bm{r}_\mathrm{p}$ is any particular solution of \eqref{eq:cond_delta'}, which can be found, for instance, by Gaussian elimination. 
Moreover, since the dimension of an affine space is the dimension of its associated linear space, we have
\begin{multline}
    \mathrm{dim}(S) = \mathrm{dim}(\mathcal{N}(\bm{CH}_\mathrm{O}))\\ 
    = N_\mathrm{O} - \mathrm{rank}(\bm{CH}_\mathrm{O}) =  N_\mathrm{O} -  M +1\, > 1\,.
\end{multline}
So, each solution of \ref{eq:cond_delta'} depends on $N_\mathrm{O} -  M +1$ parameters.
% Using again~\eqref{eq:cond_delta'} we derive the particular solution  
%\begin{equation}
%    \Delta\bm{r}_\mathrm{T}^\star = (\bm{CH}_\mathrm{O})^\dagger\delta'\,,
%\end{equation}
%where $(\bm{CH}_\mathrm{O})^\dagger = \bm{H}^T_\mathrm{O}\bm{C}^T(\bm{CH}_\mathrm{O}\bm{H}_\mathrm{O}^T\bm{C}^T)^{-1}$ is the right inverse of $\bm{CH}_\mathrm{O}$, which exists since $\bm{CH}_\mathrm{O}$ has full row rank.
\end{proof}
The above Theorem has two main consequences.
First, let us consider the worst-case scenario for the attacker, where the legitimate range residuals $\Delta\bm{r}$ already yield $\bm{C}\Delta\bm{p} = \bm{\delta}$. In this case, there is no margin for the attacker that, from \ref{eq:Cpdelta2}, has to pick an attack with $\bm{\delta}' = 0$. Still, \eqref{eq:ammissibleSolution} shows that it is possible to lead a successful attack, by picking ranges from the null space $\mathcal{N}(\bm{CH}_\mathrm{O})$. 

Secondly, it shows that the increased number of signals used in the \ac{pvt} by the legitimate receiver potentially leads to an increased degree of freedom given to the attacker to manipulate the fake \ac{pvt} solution, since the dimension of the null space increases.
However, using more open signals improves the accuracy of the \ac{pvt} estimation, when under attack. Therefore, a trade-off between accuracy and security is at stake here.
Finally, there may be scenarios where $N_\mathrm{A}<4$, therefore no fully authenticated \ac{pvt} solution is available at all. In these cases, the victim receiver has to increase the number of the open signals used for the \ac{pvt} or not compute any \ac{pvt} at all.

%  first thought would be to reduce the number of open signals in use by the victim, making the attacker harder to pull off. However, limiting the number of open signals, is not a viable solution.  For instance, taking $N_\mathrm{O} = M$, would make the check meaningless from the performance point of view, since the \ac{pvt} performed using only the $N_\mathrm{A}$ authenticated signals would be as precise as the one obtained using instead the $N_\mathrm{A} +N_\mathrm{O}$ signals, since it would mean to swap a system of $3 + 1$ equations and $3 + 1$ unknowns (3D position and clock bias), with another with $3 + M$ equations and $3 + M$ unknowns. Thus, the legitimate receiver is encouraged to use many signals as it can, to add more equations to the system and improve the solution accuracy (and availability). Finally, there may be scenarios where $N_\mathrm{A}<4$, therefore no fully authenticated \ac{pvt} solution is available at all. In these cases the legitimate receiver has to increase the number of the open signals used for the \ac{pvt} or not compute any \ac{pvt} at all.

%Indeed, it holds $\mathrm{rank}(\bm{C}) < 5$: conversely, if $\mathrm{rank}(\bm{C}) = 5$, i.e., the maximum rank, it would be possible to retrieve a valid \ac{pvt} solution directly from $\bm{C}$ and $\bm{\delta}$ themselves, making the actual \ac{gnss} transmission unnecessary. This could be done by computing 
%\begin{equation}
 %   \bm{p} = \left(\bm{C}\bm{C}^\mathrm{T} \right)^{-1} \bm{C}^{\mathrm{T}} \bm{\delta}\;.
%\end{equation}

Next, we derive the actual range alterations that lead the victim receiver to the target position $\bm{p}^\star$.
We consider the worst-case scenario for the attacker, where $\bm{\delta}' = \bm{0}$. In this case, \eqref{eq:cond_delta'} describes a linear homogeneous system whose solutions belong to the linear space $S = \mathcal{N}(\bm{CH_\mathrm{O}})$.
Indeed, this boils down to the general case by adding in the calculations a range vector related to the particular solution $\Delta \bm{r}_\mathrm{p}$.
Thus, any solution belonging to $S$ can be written as 
\begin{equation}\label{eq:subspaceKernel}
    %\bm{s} = \alpha_1 \bm{u}_1 + \ldots +  \alpha_K \bm{u}_K =
    \Delta \bm{r}_\mathrm{T} = 
    \begin{bmatrix}
        \bm{u}_1 & \ldots  &  \bm{u}_K \\
    \end{bmatrix} 
    \begin{bmatrix}
        \alpha_1 \\ \ldots  \\  \alpha_K \\
    \end{bmatrix} = \bm{U}\bm{\alpha} \;,
\end{equation}
where $\bm{u}_1,\ldots, \bm{u}_K$ is an orthonormal basis of $S$, $\alpha_k \in \mathbb{R}$ $\forall k  = 1,\ldots, K$, and $K \triangleq \mathrm{dim}\left(S\right)$. Notice that $\bm{U}$ is a matrix with $3 + M$ rows and $K$ columns.

Target and fake positions are related by
\begin{multline}\label{eq:distAlter}
    \widetilde{\bm{p}}-\bm{p}^\star =  \Delta\widetilde{\bm{p}}-\Delta\bm{p}^\star =  \bm{H} \Delta \bm{r}_\mathrm{T} - \left(\Delta\bm{p}^\star - \Delta\bm{p}\right)   \\= \bm{H} \bm{U}\bm{\alpha} -  \left(\Delta\bm{p}^\star - \Delta\bm{p}\right)\;,
\end{multline}
where in the last step we used the fact that the attacker will only use the range alterations that do not raise an alarm, i.e., belonging to~\eqref{eq:ammissibleSolution} so having the form in~\eqref{eq:distAlter}.

By imposing the target position to coincide to the fake one, i.e., $\widetilde{\bm{p}}=\bm{p}^\star$, we have 
\begin{equation}\label{eq:attackOptim}
    \bm{H} \bm{U} \bm{\alpha} =  \Delta\bm{p}^\star - \Delta\bm{p},
\end{equation}
and, resorting to the Moore-Penrose pseudoinverse, we obtain 
\begin{equation}
   \bm{\alpha}^\star = \left( \bm{U} ^\mathrm{T} \bm{H}^\mathrm{T}  \bm{H} \bm{U} \right)^{-1}  \bm{U} ^\mathrm{T} \bm{H}^\mathrm{T}  \left(\Delta\bm{p}^\star - \Delta\bm{p}\right)\,.
\end{equation}
Finally, plugging this in~\eqref{eq:subspaceKernel}, we obtain the target range alteration 
\begin{equation}\label{eq:solAlgebra}
     \Delta\bm{r}_\mathrm{T}^\star = \alpha^\star_1 \bm{u}_1 + \ldots +  \alpha^\star_K \bm{u}_K \;.
\end{equation}
Indeed, as required, this attack will not raise any alarm while taking the victim exactly to $\bm{p}^\star$.

Alternatively, by exploiting~\eqref{eq:distAlter}, we can choose the solution directly minimizing the distance between the induced and the target position, among those that do not raise an alarm, i.e., 
\begin{equation}\label{eq:minProb}
    \Delta\bm{r}_\mathrm{T}^\star =  \argmin_{\Delta\bm{r}_\mathrm{T}\in S} \| \bm{H} \Delta \bm{r}_\mathrm{T} - \left(\Delta\bm{p}^\star - \Delta\bm{p}\right)\|^2\,.
\end{equation}
Indeed, as long as the space $S$ is not empty, this problem can be solved by numeric methods.

\paragraph*{Attack Against the GGTO-based Cross-authentication Check}
We consider the authentication procedure in~\cite{motellaCrossCheck} and exploit the previous analysis to derive the attack. 
%The first goal is to identify the space of feasible solutions  $S$, containing all the possible ranges alterations that do not raise any alarm at the victim receiver. 
The technique proposed in~\cite{motellaCrossCheck} only considers signals from Galileo and GPS, so we have $M = 2$. According to~\eqref{eq:spaceSize}, we expect to find a space of feasible solutions $S$ with size $\mathrm{dim}\, (S)= N_\mathrm{O} - 1$.
We consider the worst-case scenario for the attacker where $\bm{\delta}' = \bm{0}$. %and any \ac{pvt} solution leading to $\Delta t_\mathrm{T,A} \neq \Delta t_\mathrm{T,O}$ raises an alarm at the victim receiver.
Combining~\eqref{eq:modelMotella} and~\eqref{eq:cond_delta'} and denoting as $\bm{h}_{\mathrm{O},i}$ the $i$-th row of $\bm{H}_\mathrm{O}$, we have
\begin{equation}\label{eqn:equaltimes}
    \bm{C}\bm{H}_\mathrm{O}\Delta\bm{r}_\mathrm{T,O} = \bm{0} \Leftrightarrow
    %\begin{cases}
    (\bm{h}_{\mathrm{O},4}-\bm{h}_{\mathrm{O},5})\Delta\bm{r}_\mathrm{T,O}=0\,.\\
    %-(\bm{h}_{\mathrm{O},4}-\bm{h}_{\mathrm{O},5})\Delta\bm{r}_\mathrm{T,O}=0.
    %\end{cases}  
\end{equation}
%, i.e., $\Delta\bm{r}_\mathrm{T,E}\in V = \mathcal{N}(\bm{h}_{2,4}-\bm{h}_{2,5})$. 
 %Of course, in a more practical scenario, we may relax this hypothesis, e.g., by assuming that $\left|\Delta t_\mathrm{T,A} - \Delta t_\mathrm{T,O}\right| < \epsilon$, where $\epsilon$ is tuned by the attacker taking into attack false alarm and missed detection probabilities.
Thus, $\Delta\bm{r}_\mathrm{T,O}$ belongs to the null space $\mathcal{N}(\bm{h}_{\mathrm{O},4}-\bm{h}_{\mathrm{O},5})$. This corresponds to the orthogonal complement to the space generated by column vector $(\bm{h}_{\mathrm{O},4}-\bm{h}_{\mathrm{O},5})^\mathrm{T}$, or
\begin{equation}\label{eq:setRMotella}
  S=  \mathcal{N}(\bm{h}_{\mathrm{O},4}-\bm{h}_{\mathrm{O},5}) = \left\langle(\bm{h}_{\mathrm{O},4}-\bm{h}_{\mathrm{O},5})^\mathrm{T}\right\rangle^\perp\,.
\end{equation}
Considering that $\bm{h}_{\mathrm{O},4}$ and $\bm{h}_{\mathrm{O},5}$ are column vectors of length $N_\mathrm{O}$, the null space in~\eqref{eq:setRMotella} has clearly size $N_\mathrm{O}-1$ and, as expected, $\mbox{dim}(S) =N_\mathrm{O}-1 $.
%Knowing that the solution space is not empty if $N_\mathrm{O} > 1$, to find a solution we can either i) derived explicitly~\eqref{eq:subspaceKernel} and follow the procedure until finding the solution~\eqref{eq:solAlgebra}, or ii) solve~\eqref{eq:minProb} looking for the solution among the ones belonging to the space described in ~\eqref{eq:setRMotella}.

\subsection{Attack Against Position-Based Checks}\label{sec:attTimePositionCheck}
We consider the class of checks described in Section~\ref{sec:positionChecks}. Here, the victim is using \ac{gnss} for timing, while it is monitoring its own position, using the procedure outlined in Appendix~\ref{app:pvt2}, which is used for the security check: for instance, a scenario where \ac{gnss} signals are used to synchronize the clock of a \ac{lan} within one building is presented in~\cite{paperSensors}. Indeed, in such condition the position is well-known, so the attacker can only tamper the timing estimation, and induce a \ac{pvt} shift of the type 
\begin{equation}\label{eq:timingShift}
    \Delta\bm{p}_\mathrm{T} =  \begin{bmatrix}
       \xi_x \\  \xi_y \\   \xi_z \\ c\gamma_{T}
    \end{bmatrix}
    = \begin{bmatrix}
       \xi_x \\  \xi_y \\   \xi_z \\ 0
    \end{bmatrix} 
    + 
     \begin{bmatrix}
       0\\  0\\  0 \\ c\gamma_{T}
    \end{bmatrix}
    = \bm{\xi} + 
     \begin{bmatrix}
       0\\  0\\  0 \\ c\gamma_{T}
    \end{bmatrix}\,,
\end{equation}
where $c\gamma_{T}$ is the time shift induced by the attacker to the victim clock. To not raise an alarm, by~\eqref{eq:testTiming} we need $\|\bm{\xi}\|^2 \leq\delta_\mathrm{pos}$.

In the next paragraphs, we will present a relay and a generation attack against the position-based check. Using either one, the attacker is able to induce the target time shift ~\eqref{eq:timingShift}. While a generation attack may be preferable since, for instance, it does not introduce additional noise at the victim due to the attacker receiver noise, we will prove that it can only be performed if $N_\mathrm{O}\geq 4$ signals are used by the victim.
%Without loss of generality, we are considering only two-time references, but this can be easily extended to the case where the receiver is using more than two systems.

\paragraph{Relay Attack Against the Position-Based Check}
Starting from \ref{eq:timingShift} the attack can be compute straightly from \ref{eqn:rGp} as 
\begin{equation}\label{eq:meaconingMargin}
    \Delta\bm{r}_\mathrm{T} = \bm{G}  \Delta\bm{p}_\mathrm{T} = c\gamma_{T}\bm{1} + G\bm{\xi}\;,
\end{equation}
where $\bm{1}$ is a vector of ones with the same size as $\bm{p}$. 
In the worst-case scenario for the attacker, where $\| \bm{\xi} \| =0 $ or, equivalently, $\delta_\mathrm{pos} = 0$, i.e., no change in the position is tolerated at the legitimate receiver, it yields 
\begin{equation}\label{eq:meaconing}
    \Delta\bm{r}_\mathrm{T} = c\gamma_{T}\bm{1}\;.
\end{equation}
This means that the attacker has to receive, record, and retransmit with some delay both the authenticated and the open signals to induce the required attack, as typically done for the meaconing attack. %This is often referred in the literature as {\em time-push attack}.
On the other hand, this also shows that when picking $\bm{\xi} \neq \bm{0}$ we are choosing a solution that differs by (at most) $G\bm{\xi}$ from the meaconing solution. 
%which represents an ellipsoid centered around the solution computed for $\| \bm{\xi}\| = 0$, in the 3D space of coordinates $(\xi_x,\xi_y,  \xi_z)$, whose shape is determined by the geometry matrix $\bm{G}$.

%We consider now the general case, where $0<\| \bm{\xi}\|^2 \leq\delta_\mathrm{pos}$. In this case, the resulting displacement under attack is 
%\begin{equation}
 %   \Delta\bm{r}_\mathrm{T} = \bm{G}  \Delta\bm{p}_\mathrm{T} = c\gamma_{T}\bm{1} + G\bm{\xi}\;,
%\end{equation}

Finally notice that, as pointed out in Section~\ref{sec:attackerModel}, this attack can be performed even when the ranges are authenticated by \ac{sce} or \ac{sca}.
%Still, notice that this partially violates the assumption of no tampering of the authenticated measurements, since $\Delta \bm{r}_\mathrm{T,A}\neq 0$. 
%Still, this attack can be done with a low effort by the attacker. For instance, the attacker could use a meaconing attack, where he saves the legitimate signals, waits for a time $\gamma_\mathrm{T}$, and re-transmits the signal. 
Still, this is a viable option only if the attacker is sufficiently close to the victim. Conversely, it may induce a different position calculation, that may alert the victim. 
However, especially when considering realistic scenarios, relay attacks may introduce additional noise. Thus, if possible, the attacker would rather use a generation attack even for position-based checks. For this reason, in the next paragraph, we will present a generation attack strategy against position-based checks.

\paragraph{Generation Attack Against Position-Based Checks}\label{sec:generationAttack}

In this Section, we show how and when it is possible for the attacker to perform a generation-type attack against the position-based check. The attacker aims at introducing a time shift equal to $c\gamma_\mathrm{T}$ on the estimated clock bias, as in~\eqref{eq:meaconing}.
So, by exploiting the linearization procedure of \ref{eqn:pHr} and combining it with~\eqref{eq:timingShift}, the attacker must choose $\Delta r_\mathrm{T}$ to obtain
\begin{equation}\label{eq:GenTimingProb}
    \bm{H} \Delta\bm{r}_\mathrm{T} = 
     \begin{bmatrix}
       0\\  0\\  0 \\ c\gamma_{T}
    \end{bmatrix}\,.
\end{equation} 
Indeed, the range alteration $\Delta\bm{r}_\mathrm{T}$ that solves~\eqref{eq:GenTimingProb} induces a shift only in the time component of the \ac{pvt} solution computed by the victim. 
We remark that, for ease of reading, we focus on the worst-case scenario where $\|\xi\|=0$. This can be easily extended also for the case where $\bm{\xi}\neq \bm{0}$, by adding a term $\bm{G}\bm{\xi}$ in~\eqref{eq:GenTimingProb}.

Analogously to the procedure outlined in Section~\ref{sec:genatt_VS_TimeCheck}, we have that the set of feasible solutions of \ref{eq:GenTimingProb} is the affine space 
\begin{equation}\label{eq:ammissibleSolutionTiming}
    S =  \left\{ \Delta\bm{r}_\mathrm{p} + s\,,  s \in \mathcal{N}(\bm{H}) \right\} \,.
\end{equation}
where, from \ref{eq:meaconing}, the particular solution is the meaconing attack $\Delta\bm{r}_\mathrm{p} = c\gamma_\mathrm{T}\bm{1}$.%, as found in Section~\ref{sec:attTimePositionCheck}. 

Indeed, $\bm{H}$ is $4\times N$  and $\mathrm{rank}(\bm{H})=4$, therefore
\begin{equation}
    \mathrm{dim}(\mathcal{N}(\bm{H}) ) = \mathcal{R}(\bm{H}) - \mathrm{rank}({\bm{H}}) = N -4.
\end{equation}
We remark that to compute the \ac{pvt}, it always holds $N\geq 4$, so $\mathrm{dim}(\mathcal{N}(\bm{H}) ) \geq 0$.
Hence, any range tampering that induces the target time shift, can be written as  
\begin{equation}
    s = c\gamma_\mathrm{T}\bm{1} + \sum_{n = 1}^{N-4}\beta_n \bm{u}_n = c\gamma_\mathrm{T}\bm{1} + \bm{U}\bm{\beta}
\end{equation}
where vectors $\bm{u}_1,\ldots, \bm{u}_{N-4}$ form a basis for the null space $\mathcal{N}(\bm{H})$, collected as the columns of the basis matrix $\bm{U}$.

Our aim is to find the range tampering that belongs to $S$ and does not require any alteration on the authenticated ranges. This latter requirement is formalized as 
\begin{equation}\label{eq:linearTiming}
    \bm{U}_{N_\mathrm{A}}\bm{\beta} + c\gamma_\mathrm{T}\bm{1}_{N_\mathrm{A}} = 0\;,
\end{equation}
where $\bm{U}_{N_\mathrm{A}}$ and  $\bm{1}_{N_\mathrm{A}}$ collect the first $N_\mathrm{A}$ rows of $\bm{U}$ and  $\bm{1}$, respectively.

A sufficient condition for~\eqref{eq:linearTiming} to admit at least one solution is that $\bm{U}_{N_\mathrm{A}}$, with size $N_\mathrm{A}\times N-4$, be left invertible. This happens if $N_\mathrm{A} < (N -4)$ or, equivalently, $N_\mathrm{O}>4$.\footnote{Under the condition $N_\mathrm{O}>4$, $\bm{U}_{N_\mathrm{A}}$ is left invertible only if the basis vectors $\bm{u}_1,\ldots, \bm{u}_{N-4}$ of the null space $\mathcal{N}(\bm{H})$ are chosen in such a way that $\bm{U}_{N_\mathrm{A}}$ has full row rank.} 
%To be solvable,~\eqref{eq:linearTiming} requires matrix $\bm{U}_{N_\mathrm{A}}$, with size $N_\mathrm{A}\geq N-4$, to be left invertible. This holds if $N_\mathrm{O}\geq 4$. 
Moreover this also assures that $c\gamma_\mathrm{T}\bm{1}_{N_\mathrm{A}} \in \mathcal{R}(\bm{H})$.
Under this condition, the coefficients vector that assures~\eqref{eq:linearTiming} is computed as 
\begin{equation}
    \bm{\beta}^\star = - c\gamma_\mathrm{T} \left(\bm{U}_{N_\mathrm{A}}^\mathrm{T} \bm{U}_{N_\mathrm{A}}\right)^{-1} \bm{U}_{N_\mathrm{A}}^\mathrm{T}  \bm{1}_{N_\mathrm{A}}\;.
\end{equation}

Finally, for the generation attack, the attacker has to pick ranges 
\begin{equation}
    \Delta \bm{r}^\star =  c\gamma_\mathrm{T}\bm{1} + \sum_{n = 1}^{N-4}\beta^{\star}_n \bm{u}_n \;.
\end{equation}
Indeed, by construction of $\bm{\beta}^\star$, $\Delta {r}^\star_n = 0$ for $1\leq n \leq N_\mathrm{A}$, hence only the open signals need be tampered. 

Concluding, we showed that if $N_\mathrm{O}\geq 4$, the attacker can exploit the just described procedure to perform a generation-type attack that alters the victim time of a factor $\gamma_\mathrm{T}$, generating only the open signals.

\section{Analysis for the Realistic Scenario}\label{sec:noise}
In this Section, we analyze the performance of the attack considering a realistic scenario that takes into account the noise at the victim and the attacker receivers. 
We remark that, when the attacker performs a generation attack, the fake signals are only affected by the noise added at the victim receiver. 
On the other hand, when the attacker performs a relay attack, the tampered signals are affected by both the noises produced by the victim's and the attacker's receivers.  
We assume that the channels between the satellites and the receivers, of both the attacker and the victim, are \ac{awgn}, as well as the channel between the attacker and the victim. We denote with $\sigma_\mathrm{L}$ and $\sigma_\mathrm{A}$ the noise standard deviations at the victim and attacker receiver, respectively. Moreover, as typically done for \ac{gnss}, we assume each channel to be independent, with equal variance.
%sigma T diventa sigma B per generation e sigmaA+sigmaB per relay

The noise affects the pseudorange estimations, thus, the covariance matrices of the legitimate and tampered pseudoranges residuals, respectively, are
\begin{equation}\label{eq:pseud_cov}
    \bm{\Sigma}_\mathrm{L} = \mbox{cov}(\Delta\bm{r}) \triangleq \sigma^2_\mathrm{L} \bm{I}_N\,,\;\;\; 
    \bm{\Sigma}_\mathrm{T} = \mbox{cov}(\Delta\widetilde{\bm{r}}) \triangleq \sigma^2_\mathrm{T} \bm{I}_N\,,
\end{equation}
where $\sigma^2_\mathrm{T} = \sigma^2_\mathrm{L}$ when the attacker performs a generation attack, and $\sigma^2_\mathrm{T} = \sigma^2_\mathrm{L} + \sigma^2_\mathrm{A}$ when he performs a relay attack. Moreover, the legitimate pseudorange residuals are zero-mean, at the convergence of the \ac{pvt} iterative computation. Taking into account the additional tampering introduced by the attacker, the tampered pseudoranges mean is  
\begin{equation}
    \mathbb{E}\left[\Delta \bm{\widetilde{r}} | \mathcal{H}_1\right] =  \mathbb{E}\left[ \Delta \bm{r} + \Delta \bm{r}_\mathrm{T} \right] = \Delta \bm{r}_\mathrm{T}\,.
\end{equation}
In the next Sections, we separately consider the attacks to the timing and the position-based checks.

\subsection{Attacks to the Time-Based Checks}
Our aim is to statistically characterize the vector $\bm{\theta}_\mathrm{t}$, defined in~\eqref{eq:generalCheck}, in realistic conditions.
Indeed, $\bm{\theta}_\mathrm{t} = \bm{C} \bm{H}\Delta \bm{r}$ is a Gaussian vector since it is derived from the linear combination of the Gaussian vector $\Delta \bm{r}$. Thus, to characterize it, we have to compute its mean and covariance, considering both legitimate and under-attack conditions.

In the legitimate scenario, $\mathcal{H}_0$, the mean is
\begin{equation}
    \mathbb{E}[\bm{\theta}_\mathrm{t}| \mathcal{H}_0] =\bm{C H} \mathbb{E}[\Delta \bm{r}| \mathcal{H}_0] = 0\,,
\end{equation}
while the covariance of $\bm{\theta}_\mathrm{t}$ in $\mathcal{H}_0$ is  
\begin{equation}\label{eq:variance_legit}
   \begin{split}
   \bm{\Sigma_{0}} \triangleq \mathbb{E}\left[\bm{\theta}_\mathrm{t} \bm{\theta}_\mathrm{t}^\mathrm{T}| \mathcal{H}_0\right] = \mathbb{E}\left[ \bm{C}\bm{H} \Delta \bm{r}  (\bm{C} \bm{H} \Delta \bm{r} )^\mathrm{T}| \mathcal{H}_0\right] = \\
      \bm{C} \bm{H} \mathbb{E}\left[\Delta \bm{r} \Delta \bm{r} ^\mathrm{T}\right]   \bm{H} ^\mathrm{T}\bm{C} ^\mathrm{T} = \sigma_\mathrm{L}^2 \bm{C}\bm{H}  \bm{H} ^\mathrm{T}\bm{C} ^\mathrm{T}\,,
    \end{split} 
\end{equation}
where, in the last step, we used the definition of $\bm{\Sigma}_L$ given in~\eqref{eq:pseud_cov}.

For the under-attack scenario, $\mathcal{H}_1$, the mean is given by
\begin{equation}\label{eq:meanAttack}
    \mathbb{E}[\bm{\theta}_\mathrm{t} | \mathcal{H}_1] =  \bm{C} \bm{H} \Delta \bm{r}_\mathrm{T} \triangleq \bm{\mu} \,,
\end{equation}
while, analogously to~\eqref{eq:variance_legit}, the covariance of $\bm{\theta}_\mathrm{t}$ in $\mathcal{H}_1$ is 
\begin{equation}
   \begin{split}
    \bm{\Sigma_{1}}\triangleq \mathbb{E}\left[\bm{\theta}_\mathrm{t} \bm{\theta}_\mathrm{t}^\mathrm{T}| \mathcal{H}_1 \right] =  \sigma_\mathrm{T}^2 \bm{C}\bm{H}  \bm{H} ^\mathrm{T}\bm{C} ^\mathrm{T} \,.
    \end{split} 
\end{equation}
where the last equality follows from the definition of $\bm{\Sigma}_T$ given in~\eqref{eq:pseud_cov}. Moreover, the attack to the time-based check described in Section~\ref{sec:genatt_VS_TimeCheck} is a generation attack, thus $\sigma^2_\mathrm{T} = \sigma^2_\mathrm{L}$.

The false alarm and miss detection probabilities are, respectively, 
\begin{align}\label{eq:generalPFA} 
        p_\mathrm{FA} &= \mathrm{P}\hspace*{-0.1cm} \left[ \bigcup_{i = 1}^{N_\mathrm{c}}\{\theta_{\mathrm{t},i} > \delta_{\mathrm{t},i}\} |\mathcal{H}_0\right]\hspace*{-0.1cm} = 1-\mathrm{P}\hspace*{-0.1cm} \left[ \bigcap_{i = 1}^{N_\mathrm{c}}\{\theta_{\mathrm{t},i} \leq \delta_{\mathrm{t},i} \}|\mathcal{H}_0\right]\hspace*{-0.1cm}, \\
    p_\mathrm{MD} &=  \mathrm{P}\hspace*{-0.1cm} \left[ \bigcap_{i = 1}^{N_\mathrm{c}}\{\theta_{\mathrm{t},i} \leq \delta_{\mathrm{t},i}\} |\mathcal{H}_1 \right].
\end{align}
We remark that, in general, metrics $\theta_{\mathrm{t},i}$ and $\theta_{\mathrm{t},j}$, with $i \neq j$, are not statistically independent.

\paragraph*{Attack Against the ISB-based Cross-authentication Check}
We consider now the check of \cite{motellaCrossCheck}, where only Galileo and GPS are considered. First, we neglect the bias $ b + GGTO$, which can be subtracted in advance from one measurement. So, it yields $\delta_{\mathrm{t},1} = \delta_{\mathrm{t},2} = T$. Moreover, from \ref{eq:modelMotella}, $\bm{c}_1 = - \bm{c}_2$, thus also $\theta_{\mathrm{t},1} = - \theta_{\mathrm{t},2}$, and~\eqref{eq:generalPFA} becomes
\begin{equation}\begin{split}
    p_\mathrm{FA}  &=1-P(\theta_{\mathrm{t},1}  < T \wedge \theta_{\mathrm{t},2}  < T| \mathcal{H}_0) \\ &= 1-P( - \delta_1 < \theta_{\mathrm{t},1}  < T  | \mathcal{H}_0)    =2Q(T/\sigma_0)\;,
\end{split}\end{equation}
where $\sigma_0  = \sigma_\mathrm{L}\sqrt{ \bm{c}_1\bm{H}  \bm{H} ^\mathrm{T}\bm{c}_1 ^\mathrm{T}}$. Next, we can invert the above relation to computing the threshold values as a function of the $p_\mathrm{FA}$ set by the receiver,
\begin{equation}
    T = \sigma_0 \mathrm{Q}^{-1}\left( \frac{1}{2} p_\mathrm{FA} \right)\,.
\end{equation}

Analogously, calling $\mu_1$ and $\mu_2$ the first and second element of $\bm \mu$, respectively, the miss detection probability is
\begin{equation}\label{eq:pmd_pfa} \begin{split}
    p_\mathrm{MD} &= P(\theta_{\mathrm{t},1} \leq \delta_{\mathrm{t},1} \wedge \theta_{\mathrm{t},2} \leq \delta_{\mathrm{t},2} | \mathcal{H}_1)  \\
    &= \mathrm{Q}\left( \frac{-T - \mu_1}{\sigma_1} \right) - \mathrm{Q}\left( \frac{T - \mu_1}{\sigma_1} \right)  \\
    &= \mathrm{Q}\left( -\frac{ \sigma_0}{\sigma_1} \mathrm{Q}^{-1}\left( \frac{1}{2} p_\mathrm{FA} \right) - \frac{\mu_1}{\sigma_1} \right) -\\- & \mathrm{Q}\left( \frac{ \sigma_0}{\sigma_1} \mathrm{Q}^{-1}\left( \frac{1}{2} p_\mathrm{FA} \right) - \frac{\mu_1}{\sigma_1} \right) \;,
    \end{split}\end{equation}
where $\mu_1 = - \mu_2 = \bm{c}_1 \bm{H} \Delta \bm{r }_\mathrm{T}$ and $\sigma_1 =  \sigma_\mathrm{T}\sqrt{ \bm{c}_1\bm{H}  \bm{H} ^\mathrm{T}\bm{c}_1 ^\mathrm{T}}$. Thus, \eqref{eq:pmd_pfa} shows the missed detection probability as a function of the false alarm, legitimate and attacker channel noises, and $\bm\mu$. In particular, $\bm\mu$ is also related to the range alteration induced by the attacker, which is a function of the distance between the true solution $\bm{p}$ and the target $\bm{p}^\star$.

In a generation attack scenario, where $\sigma_0 = \sigma_1$, for small values of $\mu_1$, from \ref{eq:pmd_pfa} we get $p_\mathrm{MD} = p_\mathrm{FA} $, which means that fake signals are indistinguishable from the legitimate ones. This shows that the success probability depends on $\mu_1$, the amount of displacement induced by the attacker, so that the more the attacker tries to tamper with the signals, the easier it is for the victim to detect the attack.
For instance, if we consider a relay attack for position spoofing, as \textit{selective delay}, then $\sigma_1 \gg \sigma_0$, i.e., the noise standard deviation of the tampered pseudoranges is higher than that of the legitimate ones, $p_\mathrm{MD} \rightarrow 0$ for any chosen $p_\mathrm{FA}$, thus the receiver is always able to detect the attack.
%Conversely, if $\sigma_0 \gg \sigma_1 $, $p_\mathrm{MD} \rightarrow 1$, thus the attacker signals are indistinguishable from the legitimate.
In Section~\ref{sec:ResNoise} we will provide a numerical evaluation of these observations.

\subsection{Attacks to the Position-Based Checks}\label{sec:posCheck_noise}
Our aim is to statistically characterize the term ${\theta}_\mathrm{pos}$ in~\eqref{eq:testTiming}, in a realistic environment, i.e., when the measurements are affected by noise. In particular, we aim at relating security performance, i.e., false and miss detection, to the channel noise in the legitimate and under-attack scenarios, $\sigma_\mathrm{L}^2$ and $\sigma_\mathrm{T}^2$. Notice that, in the legitimate case, this problem is equivalent to the general problem of characterizing the position accuracy for \ac{gnss}.

From~\eqref{eq:testTiming}, we can write
\begin{equation}\label{eq:thetasquare}
    \theta_\mathrm{pos}^2 = \bm{\epsilon}^\mathrm{T}\bm{\epsilon}\;,
\end{equation}
where $\bm{\epsilon}$ is the 3D position displacement with respect to the reference position. We remark that the reference position with coordinates ($x_\mathrm{ref},y_\mathrm{ref},z_\mathrm{ref}$) is assumed to be known without error.
Thus, to characterize the performance of the test~\eqref{eq:testTiming} we model term $\bm{\epsilon}$ first. Indeed, $\bm{\epsilon}$ is a Gaussian vector since the position is derived as a linear combination of the Gaussian random vector $ \Delta \bm{r}$, as in~\eqref{eqn:pHr}. 
In the legitimate case, we have 
\begin{equation}\label{eq:meanXi}
    \mathbb{E}[\bm{\epsilon}| \mathcal{H}_0] = 0\;,
\end{equation}
since the \ac{pvt} solution is expected to converge to the reference position itself. For the covariance we have
\begin{equation}\label{eq:covXi}
    \mathbb{E}[\bm{\epsilon} \bm{\epsilon} ^\mathrm{T} | \mathcal{H}_0] = \sigma^2_\mathrm{L} \bm{H}_{1-3} \bm{H}_{1-3}^\mathrm{T}\;,
\end{equation}
where the subscript of $\bm{H}_{1-3}$ refers to the fact that we are considering only rows 1, 2, and 3 of $\bm{H}$.

Under attack, we have
\begin{equation}
    \mathbb{E}[\bm{\epsilon}| \mathcal{H}_1] = \bm{H}_{1-3}  \Delta \bm{r}_\mathrm{T} \;,
\end{equation}
and, for the covariance,
\begin{equation}
     \mathbb{E}[\bm{\epsilon} \bm{\epsilon} ^\mathrm{T} | \mathcal{H}_0] = \sigma^2_\mathrm{T} \bm{H}_{1-3} \bm{H}_{1-3}^\mathrm{T}\;.
\end{equation}
Note that \eqref{eq:thetasquare} is a quadratic form, so we can take advantage of the following Theorem.
\begin{theorem}
    The random variable $\theta_\mathrm{pos}^2 = \bm{\epsilon}^\mathrm{T}\bm{\epsilon}$, with $\bm{\epsilon} \sim \mathcal{N}\left(\bm{\mu}, \bm{\Sigma} \right)$, can be written as 
    \begin{equation}
        \theta_\mathrm{pos}^2 = \left(\bm{u} + \bm{b} \right) \bm{\Lambda} \left(\bm{u} + \bm{b}\right) = \sum_i^3 \lambda_i (u_i + b_i)^2\;,
    \end{equation}
    where $\bm{u}\sim \mathcal{N}\left(\bm{0}, \bm{I}_3\right)$, $\bm{\Lambda}$ is the diagonal matrix obtained from the decomposition $ \bm{\Sigma} = \bm{P} \bm{\Lambda} \bm{P} ^\mathrm{T}$ and $\bm{b}^\mathrm{T} = \bm{P}^\mathrm{T}  \bm{\Sigma} ^{-\frac12} \bm{\mu}$. Hence, $\theta_\mathrm{pos}^2$ can be expressed as the linear combination of non-central chi-square random variables of degree 1, with non-centrality parameter $b_i^2$.    
    \end{theorem}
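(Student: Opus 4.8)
The plan is to reduce the quadratic form $\bm{\epsilon}^\mathrm{T}\bm{\epsilon}$ to a weighted sum of squares of \emph{independent} standard Gaussians by means of the linear change of variables dictated by the spectral decomposition of $\bm{\Sigma}$. Since here $\bm{\Sigma} = \sigma^2 \bm{H}_{1-3}\bm{H}_{1-3}^\mathrm{T}$ is a $3\times 3$ symmetric positive-definite matrix (whenever the geometry is nondegenerate, so that $\bm{H}_{1-3}$ has full row rank), I would first write its eigendecomposition $\bm{\Sigma} = \bm{P}\bm{\Lambda}\bm{P}^\mathrm{T}$ with $\bm{P}$ orthogonal and $\bm{\Lambda} = \mathrm{diag}(\lambda_1,\lambda_2,\lambda_3)$ having strictly positive entries, so that both $\bm{\Sigma}^{-\frac12} = \bm{P}\bm{\Lambda}^{-\frac12}\bm{P}^\mathrm{T}$ and $\bm{\Lambda}^{-\frac12}$ are well defined.

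Next I would introduce the standardizing transformation
\begin{equation}
\bm{u} \triangleq \bm{\Lambda}^{-\frac12}\bm{P}^\mathrm{T}\bm{\epsilon} - \bm{b}\,, \qquad \bm{b} = \bm{P}^\mathrm{T}\bm{\Sigma}^{-\frac12}\bm{\mu} = \bm{\Lambda}^{-\frac12}\bm{P}^\mathrm{T}\bm{\mu}\,,
\end{equation}
where the two expressions for $\bm{b}$ coincide because $\bm{P}^\mathrm{T}\bm{\Sigma}^{-\frac12} = \bm{P}^\mathrm{T}\bm{P}\bm{\Lambda}^{-\frac12}\bm{P}^\mathrm{T} = \bm{\Lambda}^{-\frac12}\bm{P}^\mathrm{T}$. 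I would then verify that $\bm{u}\sim\mathcal{N}(\bm{0},\bm{I}_3)$: being an affine image of the Gaussian $\bm{\epsilon}$, it is Gaussian; its mean is $\bm{\Lambda}^{-\frac12}\bm{P}^\mathrm{T}\bm{\mu}-\bm{b}=\bm{0}$ by the choice of $\bm{b}$; and, using $\bm{P}^\mathrm{T}\bm{\Sigma}\bm{P}=\bm{\Lambda}$, its covariance is $\bm{\Lambda}^{-\frac12}\bm{P}^\mathrm{T}\bm{\Sigma}\bm{P}\bm{\Lambda}^{-\frac12}=\bm{\Lambda}^{-\frac12}\bm{\Lambda}\bm{\Lambda}^{-\frac12}=\bm{I}_3$.

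The identity then follows by direct substitution. Since $\bm{u}+\bm{b}=\bm{\Lambda}^{-\frac12}\bm{P}^\mathrm{T}\bm{\epsilon}$ and $\bm{P}\bm{P}^\mathrm{T}=\bm{I}$,
\begin{equation}
\left(\bm{u}+\bm{b}\right)^\mathrm{T}\bm{\Lambda}\left(\bm{u}+\bm{b}\right)=\bm{\epsilon}^\mathrm{T}\bm{P}\bm{\Lambda}^{-\frac12}\bm{\Lambda}\bm{\Lambda}^{-\frac12}\bm{P}^\mathrm{T}\bm{\epsilon}=\bm{\epsilon}^\mathrm{T}\bm{\epsilon}=\theta_\mathrm{pos}^2\,,
\end{equation}
which equals $\sum_{i=1}^3\lambda_i(u_i+b_i)^2$ because $\bm{\Lambda}$ is diagonal. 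Finally, since the $u_i$ are independent standard normals, each $(u_i+b_i)^2$ is by definition a noncentral chi-square random variable with one degree of freedom and noncentrality parameter $b_i^2$, so $\theta_\mathrm{pos}^2$ is their $\lambda_i$-weighted sum, exactly as claimed.

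I expect the only genuine subtlety to be guaranteeing the invertibility of $\bm{\Sigma}$, i.e.\ the strict positivity of all $\lambda_i$, which must be argued from the nondegeneracy (full row rank) of $\bm{H}_{1-3}$; everything else is routine bookkeeping with the orthogonal matrix $\bm{P}$. Should $\bm{\Sigma}$ ever be only positive semidefinite, I would restrict the transformation to the range of $\bm{\Sigma}$ and handle the null directions separately, but for a full-rank GNSS geometry this degenerate case does not occur.
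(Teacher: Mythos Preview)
Your proof is correct and complete. The paper itself does not give a self-contained argument but simply refers to the standard treatment of Gaussian quadratic forms in Mathai (Ch.~3), specialized to $\bm{A}=\bm{I}$; your spectral-decomposition-and-whitening construction is precisely the canonical derivation one finds there, so there is no substantive difference in approach.
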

\begin{proof}
    The proof can be found in~\cite[Ch. 3]{mathai92}, for the quadratic form $Q(\bm{X}) = \bm{X}^\mathrm{T} \bm{A} \bm{X}$, but using $\bm{A} = \bm{I}$.
\end{proof}

Indeed, in the legitimate case, $\epsilon$ has zero mean, thus, $\theta^2$ is actually the linear combination of (central) chi-square random variables of degree $1$.
Still, it is not possible to derive the closed-form expression for either the false alarm or miss detection probabilities, thus relating the metric value, $\theta_\mathrm{pos}$ to the chosen threshold. 

For an application-oriented (but approximate) derivation of such relation we could exploit the definition of \ac{gdop} \cite[Ch. 7]{kaplan}. First, we rotate the reference system to have the first three coordinates pointing to the \ac{enu} reference frame. Next, repeating the steps in~\eqref{eq:covXi}, we get as position error
\begin{equation}
    \mathbb{E}\left[\Delta \bm{p} \Delta \bm{p} ^\mathrm{T} \right]= \sigma^2_L \bm{H} \bm{H}^\mathrm{T} = \sigma^2_L \left(\bm{G}^\mathrm{T} \bm{G}\right)^{-1}.
\end{equation}
Hence, taking the trace operation, we define the \ac{gdop} as 
\begin{equation}
    \mathrm{GDOP} \triangleq  \frac{1}{\sigma_L}\sqrt{\sigma_\mathrm{E}^2 + \sigma_\mathrm{N}^2 +\sigma_\mathrm{U}^2  + \sigma_\mathrm{t}^2 }\;,
\end{equation}
where $\sigma_\mathrm{E}^2$, $\sigma_\mathrm{N}^2$, and $\sigma_\mathrm{U}^2$ are the variances associated respectively with the error along the East, North, and vertical directions while $\sigma_\mathrm{t}^2$ is the error for the clock bias (measured in meters). To focus on the position error, it is useful to consider the \ac{pdop} 
\begin{equation}
    \mathrm{PDOP} \triangleq  \frac{1}{\sigma_L}\sqrt{\sigma_\mathrm{E}^2 + \sigma_\mathrm{N}^2 +\sigma_\mathrm{U}^2  }\;,
\end{equation}
These terms are used to model, and therefore also to predict, the positioning error since the \ac{dop} values are related to the pseudorange noise and the geometry. Thus, low \ac{dop} values are typically associated with high \ac{pvt} accuracy. An additional in-depth discussion of the \ac{dop} and their derivation can be found in~\cite[Ch. 21]{teunissen2017}.

Hence, it is possible to use this characterization to describe $p(\theta_\mathrm{pos}|\mathcal{H}_0)$ and $p(\theta_\mathrm{pos}|\mathcal{H}_1)$. Still, these estimates would be based only on an approximate model which is not suited for security analysis. The performance of the position check will be therefore evaluated only numerically, in Section~\ref{sec:ResNoise}.

\section{Numerical Results}\label{sec:results}
In this Section, we assess the performance of the proposed attack against the authentication mechanism discussed in Section~\ref{sec:verModel}. 
More in detail, first we describe the data collection procedure and the validation phase. Next, we test both the attacks of Section~\ref{sec:attack}, considering both noiseless and noisy scenarios.

\subsection{Data Collection}
The dataset was built by collecting signals from a quasi-urban environment with a Septentrio PolarRx5 receiver connected to an A42 Hemisphere antenna. In particular, the measurements were collected from our lab window sill ($45.408^\circ\,$N, $11.894^\circ\,$E, altitude \SI{30}{\meter} above the sea level).

The receiver output was post-processed obtaining a dataset of measurements from different constellations. We focused on GPS L1 C/A and Galileo E1 BC. 
The dataset, contained \SI{10}{\minute} of measurements, collected with a frequency of \SI{1}{\hertz}, for a total of 600 \ac{pvt} epochs.
We consider the GPS signals as authenticated (e.g., by CHIMERA) and the Galileo signals as open, and we collected signals from $N = 8$ different satellites in view, with $N_\mathrm{A} = 3$ and $N_\mathrm{O} = 5$.

More in detail, for each considered epoch, the dataset contained: the pseudoranges measurements, the satellite clock biases, the tropospheric and the ionospheric delays (estimated using the Klobuchar model \cite{klobuchar87}), and the satellite positions. Satellite clock biases, atmospheric delays, and satellite positions were derived from the respective navigation message. 

To test the performance of the attacks, we implemented a \ac{pvt} computation block operating as summarized in the Appendices, following the description of \cite[Ch. 8]{borre07}.

%we wrote our own \ac{pvt} block. The procedure follows the description reported also in Appendix~\ref{app:pvt}. More details about a custom implementation of the \ac{pvt} block can be found in~\cite[Ch. 8]{borre07}.
\begin{figure}
    \centering    
    \includegraphics[width=\columnwidth]{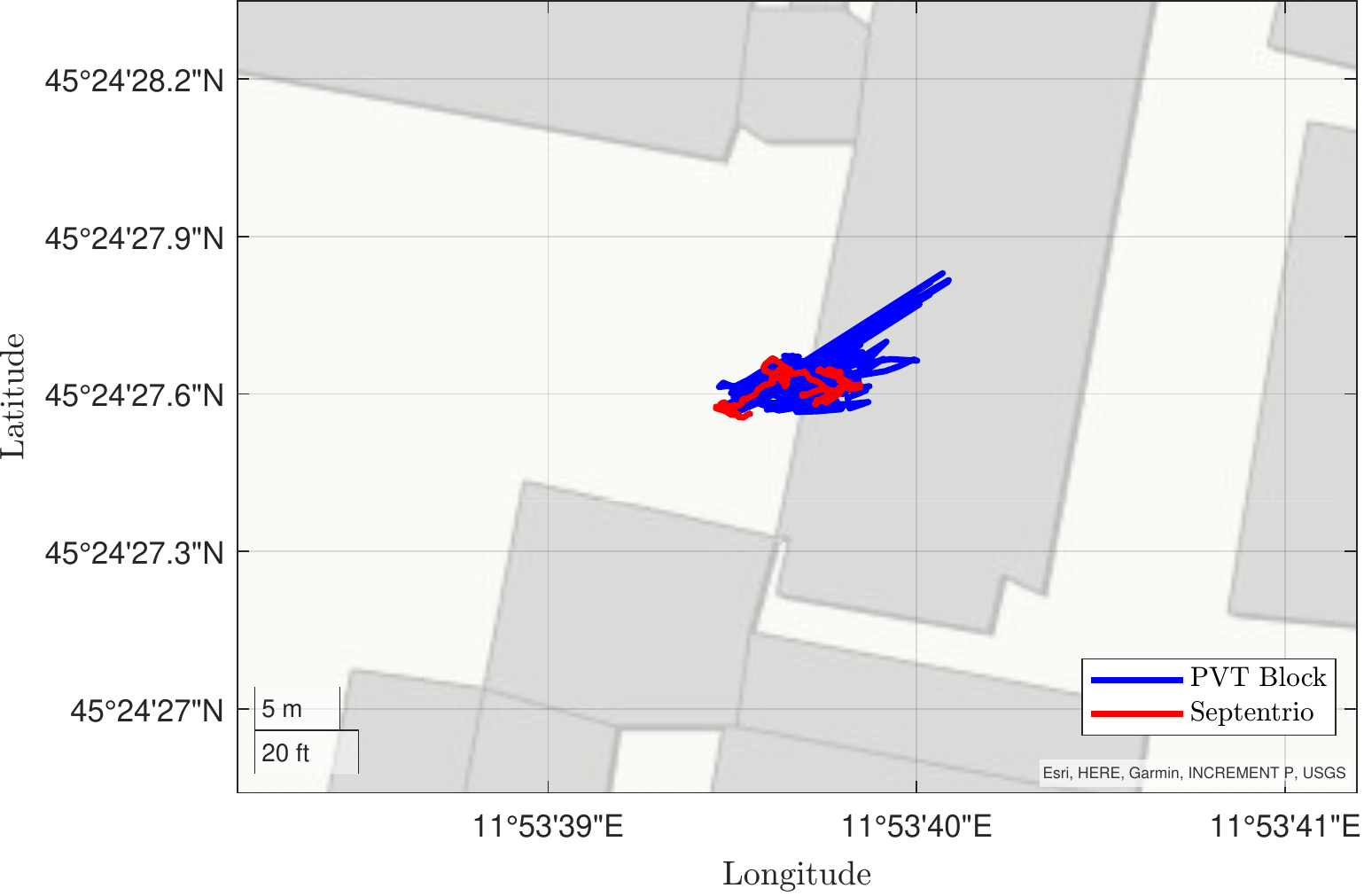}
    \caption{Validation of our \ac{pvt} block: position computed using our \ac{pvt} block (blue) and the Septentrio's software (red).}
    \label{fig:focus}
\end{figure}

To validate our \ac{pvt} block, we report in Fig.~\ref{fig:focus} respectively the positions computed by the Septentrio receiver, in blue, and those computed by our \ac{pvt} block, in red. The positions obtained from the Septentrio receiver are slightly more precise than ours: the error standard deviations were \SI{3.03}{\meter} and \SI{3.13}{\meter} using, respectively, the Septentrio receiver and our \ac{pvt} block.

\subsection{Attacks in Ideal Scenario Conditions}
We now present the attack performance in the noiseless scenario, thus $\sigma_\mathrm{L} = \sigma_\mathrm{T} =0$, distinguishing between the attack to the time-based and the position-based checks.

\paragraph{Attack Against the Time-Based Check}
We present the results of the proposed attack. In particular, we use the following procedure, where at each epoch we
\begin{enumerate}
    \item compute the \ac{pvt} solution by using the legitimate measurements (legitimate case);
    \item chose the target position $\bm{p}^\star_1$;
    \item solve  problem~\eqref{eq:minProb}, obtaining the tampered ranges $\Delta \bm{\widetilde{r}}$;
    \item feed the tampered ranges to the \ac{pvt} block;
    \item check the authenticity of the solution using the \textit{\ac{isb}-based cross-authentication check}~\eqref{eqn:test}.
\end{enumerate}
In particular, we set the target position $\bm{p}^\star_1$ at \SI{45.398}{\degree} N, \SI{11.876}{\degree} E, with altitude \SI{12}{\meter} (Prato della Valle square, Padua, Italy).
The distance between the legitimate and the target position is roughly \SI{1.7}{\kilo\meter}. 

Notice that, between steps 3) and 4) we omitted the actual fake signal generation and the processing of the victim on the received signal. This means that we assumed that, under attack, the victim processes the tampered signal as intended by the attacker. Still as experimentally proved, e.g., in~\cite{ceccato18} and \cite{lenhart21}, it is indeed feasible to induce to the victim the fake signals, hence we skipped these steps.

\begin{figure}
    \centering   
    \includegraphics[width=1.01\columnwidth]{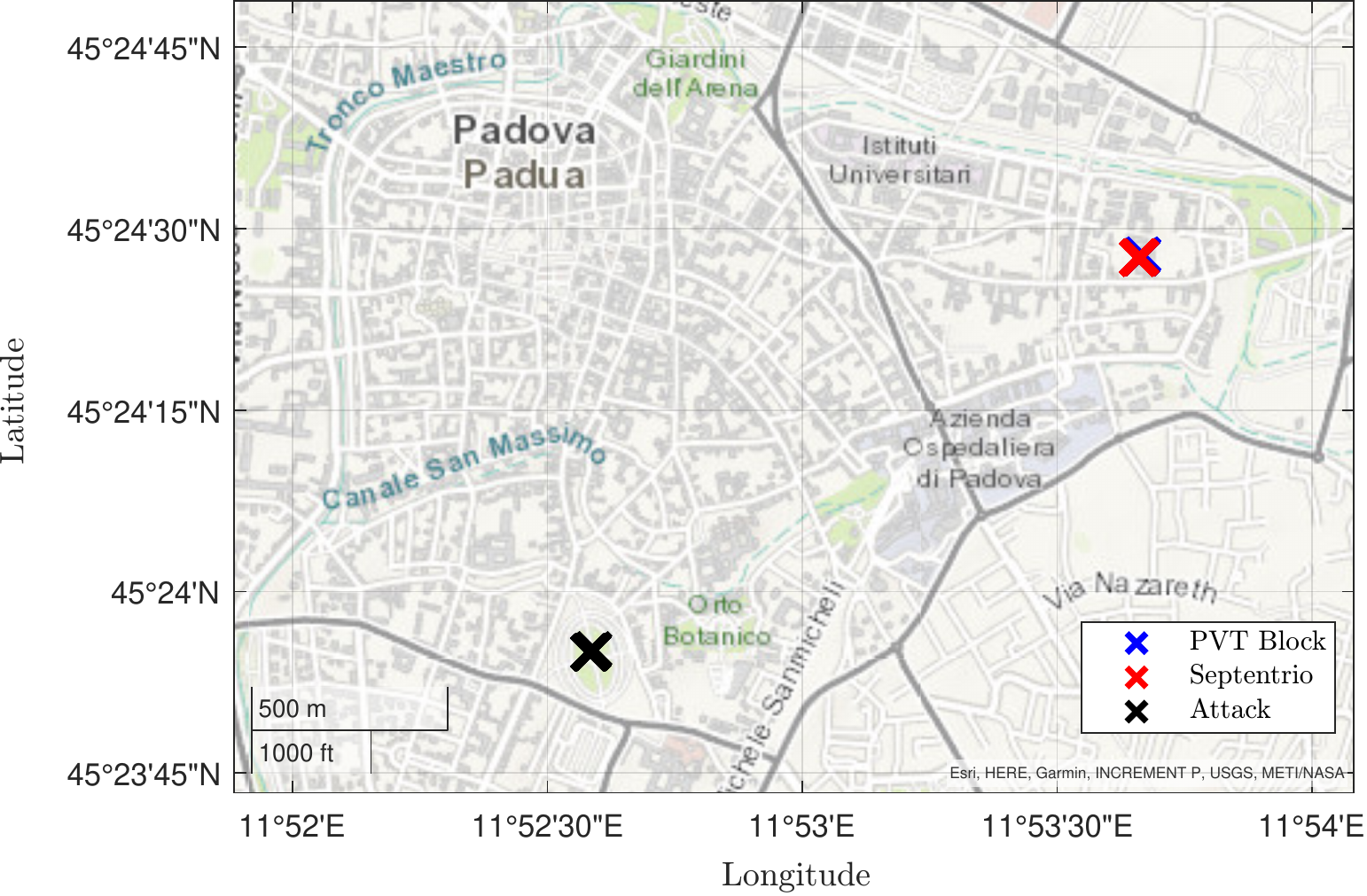}
    \caption{Positions computed in the legitimate (blue) and under attack scenario (black), compared to the ground truth (red). The position computed in the under-attack scenario coincides with the target $\bm{p}^\star$.}
    \label{fig:attackMap}
\end{figure}

Fig.~\ref{fig:attackMap} shows the positions computed in the legitimate and under-attack scenarios: the attacker is always able to induce the target position and only negligible errors are observed with respect to the target position. No issue has been observed in terms of availability, i.e., all the fake ranges led to the computation of a \ac{pvt}.

\begin{figure}
    \centering   
    \input{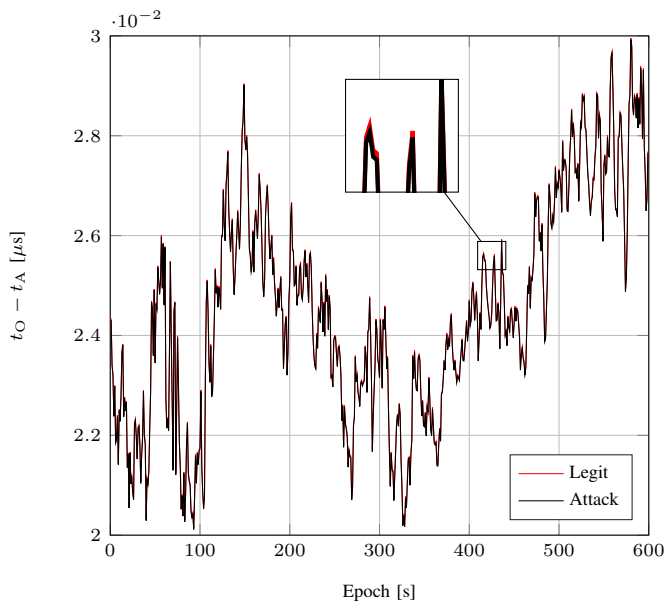}
    \caption{Metric for the check~\eqref{eqn:test}: the metric computed under attack (black) is superimposed to the one computed in the legitimate case (red).}
\label{fig:motellaCheckUnderAttack}
\end{figure}

Fig.~\ref{fig:motellaCheckUnderAttack} shows the difference between the clock biases $t_\mathrm{O}$ and $t_\mathrm{A}$ used in the security check~\eqref{eqn:test}, computed respectively in the legitimate and under-attack case. Indeed, the check cannot distinguish the legitimate from the under-attack scenario, since the actual differences between legitimate and fake are well below the standard deviation of the metric itself. Hence, for any threshold $T$, the check cannot effectively detect the attack.
For these reasons, we conclude that our attack can spoof the victim's position without being detected. 

Fig.~\ref{fig:roc_distance} shows instead the \ac{det} curves for different distances between target and actual position. Indeed as this distance increases,  it becomes easier for the victim to detect the spoofing attack. In particular, the attack success probability rapidly decreases for distances higher than \SI{10}{\kilo\meter}. This is in line with our observations in Section~\ref{sec:attackerModel}, confirming that the linearization procedure is not reliable to design the attack if the target position is too far from the real one. Still, we remark that it may not be reasonable for the attacker to consider the target position too far since other signals of opportunity may be exploited by the victim to check the computed position.

\begin{figure}
    \centering
    \vspace{0.26cm}
    \input{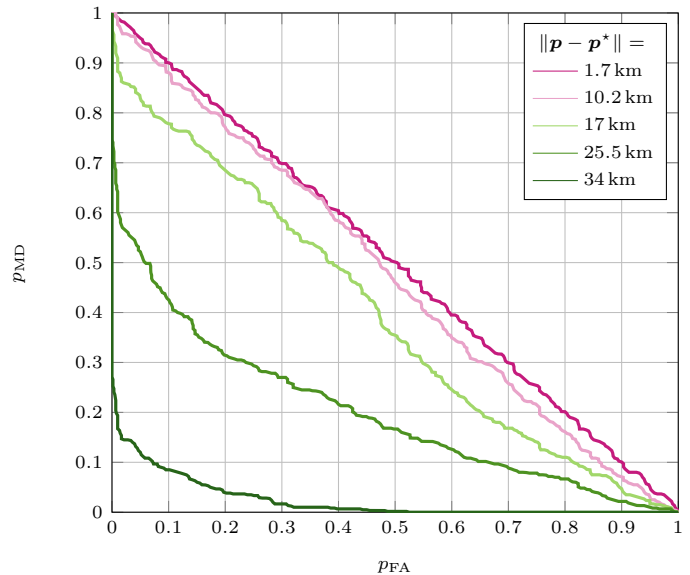}
    \caption{\ac{det} curves for different distances between legit and target positions, for the time-based cross-authentication against the proposed generation attack.}
    \label{fig:roc_distance}
\end{figure}

\paragraph{Attack Against the Position-Based Check}
We report the results of the attacks against the position cross-authentication checks, discussed in Section~\ref{sec:attTimePositionCheck} for the noiseless scenario.
In particular, three different attacks are considered (Table \ref{tab:time_push}), with different induced time shifts and attack start times.
\begin{table}
    \centering 
    \caption{Parameters of the attack to the position cross authentication check: attack starts $t_\mathrm{start}$ and additional clock biases $\Delta t -\Delta t_\mathrm{L}$. }
    \begin{tabular}{c>{\columncolor[HTML]{EFEFEF}}cc>{\columncolor[HTML]{EFEFEF}}cc>{\columncolor[HTML]{EFEFEF}}c}\toprule
         & Attack 1 & Attack 2 & Attack 3  \\ \midrule
        $t_\mathrm{start}$  & \SI{60}{\second} & \SI{100}{\second} & \SI{150}{\second}\\ 
         $\Delta t -\Delta t_\mathrm{L}$ & \SI{30}{\mu \second} & \SI{10}{\mu \second} & \SI{5}{\mu \second} \\ \bottomrule
    \end{tabular}
    \label{tab:time_push}
\end{table}

Fig.~\ref{fig:time_push_ide} shows the results of the step-change on the clock bias estimated by the victim. This represents a best-case scenario for the defender, since, typically, the time shift is induced in a ramp-like manner (e.g., see the classification of \cite{Schmidt21}). Indeed, if the considered attacks are successful, also the more sophisticated attacks will be successful as well.
During the tests, all the considered attacks achieve their aim, inducing the target time shifts. On the other hand, consider the actual position check, no significant position deviation is induced by any of the attacks, since the deviation is at the cm-level, well below the actual accuracy of the receiver. Again, none of these attacks would be detected by the victim using the considered cross-authentication checks.

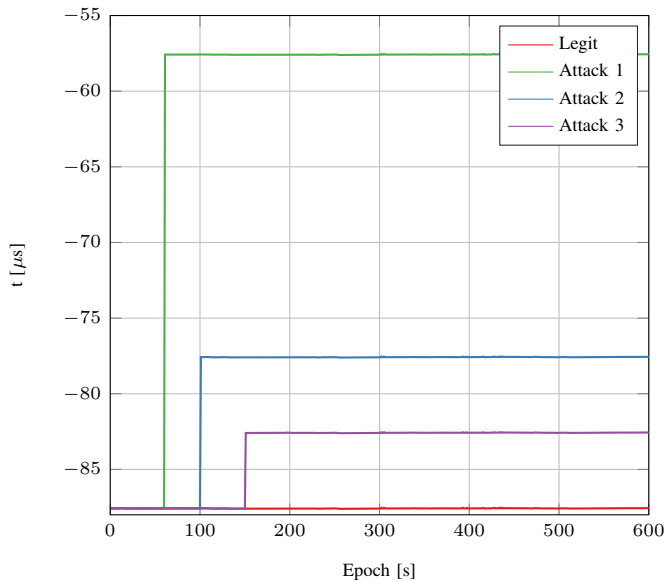
\begin{figure}
    \centering
    \definecolor{mycolor1}{RGB}{228,26,28}%
\definecolor{mycolor2}{RGB}{55,126,184}%
\definecolor{mycolor3}{RGB}{77,175,74}%
\definecolor{mycolor4}{RGB}{152,78,163
}%

\begin{tikzpicture}[every plot/.style={thick}]

\begin{axis}[%
width=0.951\fwidth,
height=\fheight,
at={(0\fwidth,0\fheight)},
scale only axis,
xmin=0,
xmax=600,
xlabel style={font=\color{white!15!black}},
xlabel={Epoch [s]},
ymin=-88,
ymax=-55,
ytick distance = 5,
ylabel style={font=\color{white!15!black}},
ylabel={t [$\mu$s]},
axis background/.style={fill=white},
xmajorgrids,
ymajorgrids,
legend style={legend cell align=left, align=left, draw=white!15!black},
enlargelimits=false,title style={font=\scriptsize},xlabel style={font=\scriptsize},ylabel style={font=\scriptsize},legend style={font=\scriptsize},ticklabel style={font=\scriptsize}
]
\addplot [color=mycolor1]
  table[row sep=crcr]{%
1	-87.5739832802125\\
5	-87.5747345898327\\
7	-87.5780190615529\\
11	-87.5638273977804\\
19	-87.5753221783334\\
104	-87.5732704605904\\
107	-87.5690528008622\\
114	-87.5901545368663\\
121	-87.5875117340162\\
126	-87.5815352555452\\
129	-87.5847683707633\\
136	-87.5950436472762\\
229	-87.5889237002749\\
233	-87.5980613276456\\
238	-87.5869191966424\\
241	-87.5896303614703\\
245	-87.5844838613302\\
251	-87.5793191158826\\
254	-87.5984466208179\\
257	-87.6065497366949\\
301	-87.5882010205412\\
302	-87.5429409547389\\
303	-87.5841558092775\\
304	-87.5850850418839\\
305	-87.5401501698876\\
306	-87.5787688075284\\
310	-87.5819756367382\\
329	-87.5779543182675\\
331	-87.5792359566123\\
338	-87.5716333982353\\
347	-87.5808552981415\\
351	-87.5759806636332\\
356	-87.5818774196641\\
365	-87.5756032595265\\
380	-87.5751999226873\\
383	-87.5789154210112\\
392	-87.5755594193296\\
393	-87.5443840121966\\
394	-87.5741516420539\\
396	-87.577850268748\\
397	-87.5490641148311\\
398	-87.5845308101323\\
400	-87.5774433882641\\
415	-87.5722688513575\\
416	-87.5455747705387\\
417	-87.5834484438277\\
420	-87.5750544502013\\
425	-87.5761369077386\\
426	-87.5638097502476\\
427	-87.5655005433183\\
428	-87.5392070189654\\
429	-87.5711271606353\\
433	-87.565377563212\\
434	-87.5349364892312\\
435	-87.5720954232909\\
436	-87.5682047701649\\
437	-87.5402912167983\\
438	-87.5634608075766\\
463	-87.5701050807869\\
470	-87.5777832582216\\
473	-87.5633194073489\\
476	-87.5663885697144\\
480	-87.5760375848889\\
488	-87.5783661403765\\
490	-87.5816197626797\\
493	-87.5779961111207\\
497	-87.5867914677669\\
506	-87.5728016107216\\
508	-87.5741711844805\\
511	-87.5850684864101\\
518	-87.5856761962713\\
557	-87.5719946449216\\
567	-87.5667856711402\\
583	-87.5649328909454\\
589	-87.5650944052893\\
600	-87.5587130998262\\
};
\addlegendentry{Legit}
\addplot [color=mycolor3]
  table[row sep=crcr]{%
1	-87.5739832802125\\
5	-87.5747345898327\\
7	-87.5780190615529\\
11	-87.5638273977804\\
19	-87.5753221783334\\
60	-87.5813376182728\\
61	-57.5793004128393\\
63	-57.580504743028\\
87	-57.5803868065605\\
103	-57.574978742357\\
109	-57.5779995512906\\
121	-57.5875117340261\\
126	-57.581535255572\\
129	-57.5847683707808\\
136	-57.5950436472768\\
229	-57.5889237002892\\
233	-57.5980613276452\\
238	-57.5869191966221\\
241	-57.5896303614828\\
245	-57.5844838612858\\
251	-57.5793191159047\\
254	-57.5984466207912\\
257	-57.6065497367146\\
301	-57.5882010205327\\
302	-57.5429409546692\\
303	-57.5841558093066\\
304	-57.5850850418865\\
305	-57.5401501698145\\
306	-57.5787688075619\\
310	-57.5819756367495\\
329	-57.5779543182414\\
331	-57.5792359566052\\
338	-57.5716333982788\\
347	-57.5808552981553\\
351	-57.5759806636488\\
356	-57.5818774196772\\
365	-57.575603259522\\
380	-57.5751999227188\\
383	-57.578915420993\\
392	-57.575559419318\\
393	-57.5443840121875\\
394	-57.5741516420377\\
396	-57.577850268761\\
397	-57.5490641148504\\
398	-57.5845308101132\\
400	-57.5774433882741\\
415	-57.5722688513697\\
416	-57.5455747705246\\
417	-57.5834484438154\\
420	-57.5750544502346\\
425	-57.5761369077391\\
426	-57.5638097502678\\
427	-57.5655005433023\\
428	-57.5392070189478\\
429	-57.5711271606722\\
433	-57.5653775631774\\
434	-57.5349364892226\\
435	-57.5720954232571\\
436	-57.5682047701616\\
437	-57.5402912168496\\
438	-57.5634608075633\\
463	-57.5701050807958\\
470	-57.5777832582163\\
473	-57.5633194073782\\
476	-57.5663885697212\\
480	-57.5760375849238\\
488	-57.57836614037\\
490	-57.5816197626619\\
493	-57.5779961111175\\
497	-57.5867914677693\\
506	-57.5728016107034\\
508	-57.5741711844997\\
511	-57.5850684863941\\
518	-57.5856761962521\\
557	-57.5719946449143\\
567	-57.5667856711401\\
583	-57.5649328909064\\
589	-57.565094405262\\
600	-57.5587130998073\\
};
\addlegendentry{Attack 1}
\addplot [color=mycolor2]
  table[row sep=crcr]{%
1	-87.5739832802125\\
5	-87.5747345898327\\
7	-87.5780190615529\\
11	-87.5638273977804\\
19	-87.5753221783334\\
100	-87.5792327894726\\
101	-77.5801027623899\\
105	-77.5696473896272\\
108	-77.5736996183857\\
114	-77.5901545368678\\
121	-77.5875117340282\\
126	-77.5815352555743\\
129	-77.5847683707831\\
136	-77.5950436472791\\
229	-77.5889237002914\\
233	-77.5980613276473\\
238	-77.5869191966244\\
241	-77.589630361485\\
245	-77.584483861288\\
251	-77.5793191159069\\
254	-77.5984466207934\\
257	-77.6065497367169\\
301	-77.5882010205351\\
302	-77.5429409546716\\
303	-77.584155809309\\
304	-77.5850850418889\\
305	-77.5401501698168\\
306	-77.5787688075643\\
310	-77.5819756367517\\
329	-77.5779543182437\\
331	-77.5792359566075\\
338	-77.5716333982811\\
347	-77.5808552981575\\
351	-77.5759806636511\\
356	-77.5818774196795\\
365	-77.5756032595243\\
380	-77.5751999227209\\
383	-77.5789154209953\\
392	-77.5755594193203\\
393	-77.5443840121898\\
394	-77.57415164204\\
396	-77.5778502687632\\
397	-77.5490641148526\\
398	-77.5845308101154\\
400	-77.5774433882764\\
415	-77.572268851372\\
416	-77.5455747705269\\
417	-77.5834484438178\\
420	-77.5750544502367\\
425	-77.5761369077413\\
426	-77.5638097502699\\
427	-77.5655005433046\\
428	-77.5392070189502\\
429	-77.5711271606745\\
433	-77.5653775631796\\
434	-77.5349364892248\\
435	-77.5720954232594\\
436	-77.5682047701638\\
437	-77.5402912168519\\
438	-77.5634608075654\\
463	-77.570105080798\\
470	-77.5777832582186\\
473	-77.5633194073805\\
476	-77.5663885697235\\
480	-77.5760375849261\\
488	-77.5783661403722\\
490	-77.5816197626642\\
493	-77.5779961111198\\
497	-77.5867914677716\\
506	-77.5728016107057\\
508	-77.574171184502\\
511	-77.5850684863964\\
518	-77.5856761962543\\
557	-77.5719946449166\\
567	-77.5667856711424\\
583	-77.5649328909087\\
589	-77.5650944052643\\
600	-77.5587130998094\\
};
\addlegendentry{Attack 2}

\addplot [color=mycolor4]
  table[row sep=crcr]{%
1	-87.5739832802125\\
5	-87.5747345898327\\
7	-87.5780190615529\\
11	-87.5638273977804\\
19	-87.5753221783334\\
104	-87.5732704605904\\
107	-87.5690528008622\\
114	-87.5901545368663\\
121	-87.5875117340162\\
126	-87.5815352555452\\
129	-87.5847683707633\\
136	-87.5950436472762\\
150	-87.5936000155218\\
151	-82.5928019336955\\
163	-82.5909900757788\\
192	-82.5819202002448\\
195	-82.5870662628148\\
196	-82.5716382580719\\
198	-82.5726764484079\\
204	-82.5897637154159\\
207	-82.5871074989516\\
211	-82.5871162943591\\
215	-82.5851808279994\\
219	-82.5851343363593\\
229	-82.5889237002921\\
233	-82.598061327648\\
238	-82.5869191966251\\
241	-82.5896303614857\\
245	-82.5844838612887\\
251	-82.5793191159075\\
254	-82.598446620794\\
257	-82.6065497367175\\
301	-82.5882010205355\\
302	-82.5429409546721\\
303	-82.5841558093096\\
304	-82.5850850418894\\
305	-82.5401501698175\\
306	-82.5787688075649\\
310	-82.5819756367523\\
329	-82.5779543182442\\
331	-82.579235956608\\
338	-82.5716333982815\\
347	-82.5808552981581\\
351	-82.5759806636516\\
356	-82.58187741968\\
365	-82.5756032595249\\
380	-82.5751999227215\\
383	-82.5789154209958\\
392	-82.5755594193209\\
393	-82.5443840121903\\
394	-82.5741516420406\\
396	-82.5778502687638\\
397	-82.5490641148533\\
398	-82.584530810116\\
400	-82.5774433882771\\
415	-82.5722688513725\\
416	-82.5455747705274\\
417	-82.5834484438184\\
420	-82.5750544502373\\
425	-82.5761369077419\\
426	-82.5638097502706\\
427	-82.565500543305\\
428	-82.5392070189507\\
429	-82.571127160675\\
433	-82.5653775631802\\
434	-82.5349364892254\\
435	-82.57209542326\\
436	-82.5682047701644\\
437	-82.5402912168524\\
438	-82.563460807566\\
463	-82.5701050807987\\
470	-82.5777832582191\\
473	-82.5633194073811\\
476	-82.5663885697242\\
480	-82.5760375849266\\
488	-82.5783661403729\\
490	-82.5816197626648\\
493	-82.5779961111202\\
497	-82.5867914677723\\
506	-82.5728016107064\\
508	-82.5741711845026\\
511	-82.5850684863971\\
518	-82.5856761962548\\
557	-82.5719946449171\\
567	-82.566785671143\\
583	-82.5649328909093\\
589	-82.5650944052649\\
600	-82.5587130998101\\
};
\addlegendentry{Attack 3}

\end{axis}
\end{tikzpicture}%
    \vspace*{-0.8cm}
    \caption{Effect of the time attack on the clock bias estimated by the victim: $\Delta t$ for the legitimate case (red), for Attack 1 (green), 2 (blue), and 3 (purple), whose parameters are reported in Table \ref{tab:time_push}. }
    \label{fig:time_push_ide}
\end{figure}

\subsection{Results with Noisy Measurements}\label{sec:ResNoise}
In this Section, we evaluate the performance of the proposed attack in noisy conditions, distinguishing between time-based and position-based checks. 
%In particular, we will distinguish between generation and relay attacks. For the first, we will add noise at the victim reciever to both legitimate and fake ranges. This will simulate the attack against victim in different conditions.}

%To test the performance with noise, we add to the measurements used for the attacker to compute the fake ranges in~\eqref{eq:minProb} and additional noise, with standard deviation $\sigma_\mathrm{n}$. The actual standard deviation of the attacker input measurements will be
%\begin{equation}\label{eq:noiseModel}
 %   \sigma_\mathrm{T} = \sigma_\mathrm{L} + \sigma_\mathrm{n}\,,
%\end{equation}
%with $\sigma_\mathrm{n} = 0,1,2,4,$ and \SI{9}{\meter}.  We remark that, in general, a reasonable noise standard deviation is $\SI{7.1}{\meter}$ \cite[Ch. 7]{kaplan}\footnote{For instance, from our measurements, we get slightly more than \SI{3}{\meter}.}. Thus, we are considering a scenario where the attacker has conditions worse than both the legitimate receiver and even of a typical receiver. Indeed, this is indeed a scenario advantageous to the defense.

Starting from the experimental dataset, we run a Montecarlo simulation, repeating each experiment 35 times, therefore obtaining in total 21\,000 \ac{pvt} measurements.

%Notice that the $\epsilon$ values represent the difference between the attacker and the legitimate noise standard deviation: this means that the same missed detection probability would be obtained even if the attacker noise is actually less noisy with respect to the legitimate one. Still, in such a scenario, we would achieve the same $p_\mathrm{MD}$ but a higher $p_\mathrm{FA}$. \hl{to be checked}

\paragraph{Attacks Against the Time-Based Check}
As proved in Section~\ref{sec:genatt_VS_TimeCheck}, we show that it is possible for the attacker to lead a successful attack using a generation attack. 
%indeed, in such condition, it is possible for the attacker to transmit signals that have the same power of the legitimate one at the victim receiver.

We repeat the same attack simulation procedure adopted for the noiseless scenario but, between steps 3) and 4), we add \ac{awgn} to both the legitimate and the fake pseudoranges, according to the noise standard deviation at the victim receiver $\sigma_\mathrm{L}$, with $\sigma_\mathrm{L}= 0,1,2,4,$ and \SI{9}{\meter}. We remark that, in this case, the attacker performs a generation attack, thus pseudoranges are only affected by the noise at the victim receiver. In general, a reasonable value for the noise standard deviation is $\SI{7.1}{\meter}$ \cite[Ch. 7]{kaplan}\footnote{For instance, from our measurements, we get slightly more than \SI{3}{\meter}.}.
To evaluate the attack performance, we consider a target position \SI{25.5}{\kilo\meter} away from the victim receiver location.

Fig.~\ref{fig:roc} reports the \ac{det} curve for the generation attack against time-based checks, showing the miss detection probability $p_\mathrm{MD}$ as a function of the false alarm probability $p_\mathrm{FA}$ and various values of $\sigma_\mathrm{L}$. 
We notice that, as the noise at the victim receiver increases, the victim becomes unable to distinguish between legitimate and fake signals. Still, the results confirm that the check is not fully reliable even when the noise is minimum since, even in this condition, to get $p_\mathrm{MD}<10^{-2}$ it would yield $p_{\mathrm{FA}} \approx 0.88$, thus rejecting most of the legitimate signals.
%This confirms the observations of Section~\ref{sec:noise}. When $\sigma_0 \approx \sigma_1$, the effectiveness of the defense depends on the amount of displacement decided by the attacker: when $\sigma_0 = \sigma_1$ and $\bm{p}_1^\star$ is distant \SI{1.7}{\kilo\meter} from the true position, the attacker and the legitimate signals are indistinguishable, with the \ac{det} curve matching the diagonal $p_\mathrm{MD} =1 - p_\mathrm{FA}$. Conversely, when $\sigma_0 \gg \sigma_1$ the attacker has less chances of success.
%When considering the more challenging attack to $\bm{p}_2^\star$, the check is more effective. Still, Fig.~\ref{fig:roc_far} confirms that, even in this scenario, the check is still not fully  reliable since, even in the most favorable conditions, to have $p_\mathrm{MD}<10^{-2}$ the receiver would get $p_{\mathrm{FA}} \approx 0.88$, thus rejecting most of the legitimate signals.

\begin{figure}
    \centering
    %\subfloat[$\bm{p}_1^\star$, with distance to legit position \SI{17}{\kilo\meter}.\label{fig:roc_far}]{\input{graphs/noise_far.tex}}   \\
    \input{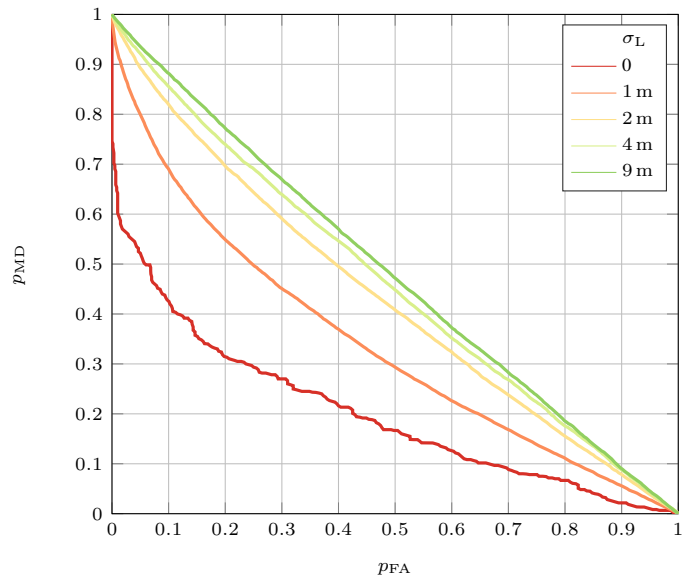}
    \caption{\ac{det} curves with for the generation attack against time-based checks, when the attack target position is \SI{25.5}{\kilo\meter} away from the victim location, for different values of  $\sigma_\mathrm{L}$. }\label{fig:roc}
\end{figure}

\begin{comment}

Interestingly, the \ac{det} show a similar behavior for both target positions $\bm{p}_1^\star$ and $\bm{p}_2^\star$ at high values of $\sigma_\mathrm{n}$. However, this is not true when observing the actual position accuracy, reported in Table \ref{tab:accuracy}. This shows that picking the target position further from the true one leads to a degradation of the actual positioning accuracy.

\begin{table}
    \centering 
    \caption{Position standard deviations obtained for the two attacker target positions $\bm{p}_1^\star$ and $\bm{p}_2^\star$, as a function of the added noise standard deviation $\sigma_\mathrm{n}$. }
    \begin{tabular}{c>{\columncolor[HTML]{EFEFEF}}cc>{\columncolor[HTML]{EFEFEF}}cc>{\columncolor[HTML]{EFEFEF}}c}\toprule
        $\sigma_\mathrm{n} = $& 0 & \SI{1}{\meter} & \SI{2}{\meter} & \SI{4}{\meter} & \SI{9}{\meter}  \\ \midrule
        $\sigma_{\bm{p}^\star_1}$  & \SI{3.03}{\meter} & \SI{4.72}{\meter} & \SI{7.79}{\meter} & \SI{13.64}{\meter} & \SI{31.85}{\meter}\\ 
        $\sigma_{\bm{p}^\star_2}$   & \SI{3.76}{\meter} & \SI{5.16}{\meter} & \SI{7.85}{\meter} & \SI{14.49}{\meter} & \SI{32.17}{\meter}\\ \bottomrule
    \end{tabular}
    \label{tab:accuracy}
\end{table}

This allows us to test how the choice of the target location influences the success probability of the attack.
\end{comment}

\paragraph{Attacks Against the Position-Based Check}
As discussed in Sections~\ref{sec:attTimePositionCheck} and \ref{sec:generationAttack}, if $N_\mathrm{O}<4$ only a relay attack can induce the desired clock-shift while keeping low the probability of being detected. Conversely, when $N_\mathrm{O}\geq 4$, also a generation attack can be used to achieve the same goal. It is worth noting that, when both the relay and the generation attack are viable, the latter is always the preferred option, since it involves signal generation instead of retransmission, and, as a result, the attacker does not introduce additional noise to achieve a better performance in terms of $p_\mathrm{FA}$ and $p_\mathrm{MD}$, as will be discussed in this Section.

First, we consider the relay-type attack described in Section~\ref{sec:attTimePositionCheck}. Indeed, this represents the worst case for the attacker. To show the attack performance, we add noise only to the tampered ranges, so we consider $\sigma_\mathrm{L}=0$. Thus, the fake pseudoranges variance is $\sigma^2_\mathrm{T}= \sigma^2_\mathrm{A}$.
%with $\sigma_\mathrm{n} = 0,1,2,4,$ and \SI{9}{\meter}, modeling the additional noise at the attacker receiver. 
%We remark that this models a scenario where the attacker has conditions worse than both the legitimate receiver and, possibly, even of a typical receiver and advantageous for the defense.

We consider the scenarios where the attacker aims to induce a shift in the estimated clock bias $\Delta t -\Delta t_\mathrm{L}= \SI{10}{\micro \second}$. No relevant difference was observed when considering $\Delta t -\Delta t_\mathrm{L} = 5$ and $\SI{30}{\micro\second}$. In all the considered scenarios, when receiving the tampered signal, the victim computes the clock bias set by the attacker, with only negligible error. Hence whenever no alarm was raised, the attacks were successful. 

Fig.~\ref{fig:time_push_noise} reports the \ac{det} curves of the attack for several values of $\sigma_\mathrm{A}$. Indeed, for $\sigma_\mathrm{A} =0$, thus simulating an attacker able to perform a relay attack without adding any additional noise, tampered and legitimate signals are indistinguishable. On the other hand, as $\sigma_\mathrm{A}$ increases, it is easier for the victim to detect the attack. 
Still, even when $\sigma_\mathrm{A} = \SI{9}{\meter}$, the check is only partially effective since to get low values of missed detection the victim experience high false alarm: for instance to get $p_\mathrm{MD}<10^{-3}$ (not such a low value for a security application) the victim would obtain a $p_{\rm FA} > 0.2$.

\begin{figure}
    \centering
    %\subfloat[$\Delta t -\Delta t_\mathrm{L}= \SI{10}{\micro \second}$. \label{fig:roc_timepush1}]{\input{graphs/time_push_noise.tex}}    \\  
    \input{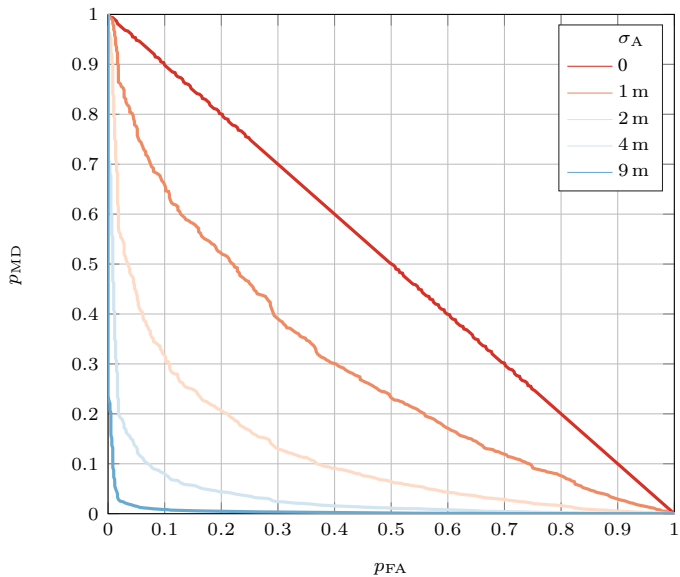}
    %\subfloat[$\bm{p}_2^\star$, distance to legit position \SI{17}{\kilo\meter}.\label{fig:roc_far}]{\input{graphs/noise_far.tex}}   
    \caption{\Ac{det} curves for the relay attack against the position cross-authentication check, for $\Delta t -\Delta t_\mathrm{L}= \SI{10}{\micro \second}$, for several values $\sigma_\mathrm{A}$ and $\sigma_\mathrm{L}$ fixed.} 
    \label{fig:time_push_noise}
\end{figure}

Concerning the generation attack against the time-based check, we carried out attack simulations adding the \ac{awgn} of the victim receiver to both fake and legitimate ranges, with $\sigma_\mathrm{L} = 0,1,2,4,$ and \SI{9}{\meter}. For all the considered values of noise standard deviation, the fake ranges resulted to be indistinguishable from the legitimate ones, and all the \ac{det} curves collapsed to the line $p_{\rm FA} = p_{\rm MD}$.  

\section{Conclusion}\label{sec:conclusions}
In this work, general attack strategies targeting cross-authentication \ac{pvt}-based mechanisms were presented. In such mechanisms, the victim receiver computes the \ac{pvt} using all the available signals, both open and authenticated and tries to assess the authenticity of the computed \ac{pvt} using some a priori information.
In particular, we focused on two classes of checks: the timing cross-authentication, such as \cite{motellaCrossCheck}, where the check is performed verifying if the clock biases relative to each system are related to the actual \acp{isb}; the position cross-authentication, that may be used for a timing service, where the \ac{pvt} is considered to be authentic if it leads (sufficiently close) to the expected position. 
For both scenarios, we show that the attacker leads the victim to a chosen target \ac{pvt} solution, i.e., to a target 3D position or clock correction. We show that the success of the attack depends on both the number of open signals used by the receiver and the attacker conditions. Still, under realistic conditions, the attack is shown to be successful.

We considered both noiseless and noisy scenarios: in the latter, the receiver obtains a noisier replica of the signal tampered with by the receiver. Numerical results confirm the effectiveness of the attack in both the considered scenarios, showing that such a class of security checks is not able to protect the receiver effectively.

In several practical scenarios, less than 4 authenticated signals may be received so that the receiver cannot compute a \ac{pvt} based on the authenticated signals solely.
Our work shows that it is not possible to extend the authentication to open signals by using \ac{pvt}-based consistency checks, so the receiver can either choose to compute a non-secure \ac{pvt} or to not perform navigation at all. % Thus, more efforts need to be devoted by the community to derive solutions toward the secure cross-authentication, seeking the trade-off between performance, i.e., availability and accuracy of the \ac{pvt} solutions, and security.

\bibliography{main.bib}{}
\bibliographystyle{IEEEtran}

%\begin{IEEEbiography}[{\includegraphics[width=1in,height=1.25in,clip,keepaspectratio]{a1.png}}]{Francesco Ardizzon} received the bachelor's Degree in Information Engineering and the Master Degree information Engineering respectively in 2015 and 2019 for the University of Padova, Italy. He is currently pursing a Ph.D. degree in Information Engineering with the University of Padova. His research includes PVT assurance for global navigation satellite systems (GNSSs) and physical layer security for Underwater Acoustic Networks.
%\end{IEEEbiography}

%\begin{IEEEbiography}[{\includegraphics[width=1in,height=1.25in,clip,keepaspectratio]{a2.png}}]{LAURA CROSARA} received the bachelor’s degree in Information Engineering and the master’s degree in ICT for Internet and Multimedia engineering from the University of Padova, Italy, in 2019 and 2021, respectively. She is currently pursuing the Ph.D. degree in information engineering with the University of Padova. Her research interests include authentication techniques for GNSS signals.
%\end{IEEEbiography}

%\begin{IEEEbiography}[{\includegraphics[width=1in,height=1.25in,clip,keepaspectratio]{a3.png}}]{Stefano Tomasin} 
%\end{IEEEbiography}

%\begin{IEEEbiography}[{\includegraphics[width=1in,height=1.25in,clip,keepaspectratio]{a3.png}}]{Nicola Laurenti} 
%\end{IEEEbiography}

\appendices
\section{PVT Computation with Separate Time References}
\label{app:pvt1}
We describe the \ac{pvt} linearization procedure and the derivation of each term needed to update the \ac{pvt} during the iterative approach, to compute the terms of \ref{eqn:pHr}.

From the received signal, we derive the pseudorange vector ${\bm{r}}$. For ease of reading, we consider only signals from two systems: hence the \ac{pvt} will contain entries $t_\mathrm{1}$ and $t_\mathrm{2}$, the clock biases associated with each of the time references, e.g., to the \ac{gpst} and \ac{gst}. In particular, the receiver collects $N_1$ signals from the first constellation and $N_2 = N -N_1$ from the latter. 

%here that the authenticated signals are GPS signals, for instance, in L1 C/A signals authenticated by \ac{chim}, and the open signals are Galileo signals in the E1 BC band. 
In general, each pseudorange is modeled by 
\begin{equation}\label{eq:rangeModel}
        r_j = \rho_j + c t_j -c t_\mathrm{1}\mathbbm{1}_\mathrm{1}(j)  - c t_\mathrm{2}\mathbbm{1}_\mathrm{2}(j) + D_{\mathrm{atm},j}+ \eta\,,
\end{equation}
where $\rho_j $ is the geometric range, i.e., the Euclidean distance between the (estimated) receiver position and the $j$th satellite position, with coordinates $(x_j,y_j,z_j)$, defined as 
\begin{equation}
    \rho_j \triangleq \sqrt{(x_j- x)^2 + (y_j- y)^2 + (z_j- z)^2}\,,
\end{equation}
$c$ is the speed of light, $t_j$ is the satellite clock bias, $D_{\mathrm{atm},j}$ models the atmospheric delays, and $\eta$ models the remaining errors, considered as noise.
The indicator functions $\mathbbm{1}_\mathrm{1}(j) $ and $\mathbbm{1}_\mathrm{2}(j) $ are introduced to identify the satellite constellations, i.e.,  $\mathbbm{1}_\mathrm{1}(j) = 1$ if satellite $j$ belongs to the first constellation and 0 otherwise, while $\mathbbm{1}_\mathrm{2}(j) = 1$ if satellite $j$ belongs to the second and is 0 otherwise.
Notice that (approximations of) the satellites' positions, $x_j$, $y_j$, and $z_j$, $t_j$, and $D_{\mathrm{atm},j}$, are retrieved directly from the navigation messages. 

%Thus, given the \ac{pvt} solution obtained at the previous iteration, $\hat{\bm{p}}$, we can compute the range estimation $\hat{r}_j$ by using~\eqref{eq:rangeModel}. 

Next, we introduce the terms 
\begin{equation}
    a_{x,j} \triangleq  \frac{x_j-\hat x}{\hat \rho_j}, \hspace{0.4cm}
    a_{y,j} \triangleq  \frac{y_j-\hat y}{\hat \rho_j}, \hspace{0.4cm}
    a_{z,j} \triangleq  \frac{z_j-\hat z}{\hat \rho_j}\,, 
\end{equation}
which represent the components of the unit vector pointing from the (approximate) receiver to the $j$th satellite position.
This allows us to introduce the geometry matrix 
\begin{equation}
    \bm{G} \triangleq
    \begin{bmatrix}
    a_{x,1} & a_{y,1} & a_{z,1} & 1 & 0\\
    \vdots & \vdots & \vdots & \vdots & \vdots\\
    a_{x,N_1} & a_{y,N_1} & a_{z,N_1} & 1 & 0\\
    a_{x,N_1+1} & a_{y,N_1+1} & a_{z,N_1+1} & 0 & 1\\
    \vdots & \vdots & \vdots & \vdots & \vdots\\
    a_{x,N} & a_{y,N} & a_{z,N} & 0 & 1
    \end{bmatrix}\;.
\end{equation}

Following the linearization procedure reported, for instance, in~\cite[Ch. 2]{kaplan}, and recalling \eqref{eq:rangeResiduals}, we obtain
\begin{multline}
        \Delta r_j =a_{x,j}\Delta x + a_{y,j}\Delta y + a_{z,j}\Delta z
        - \, c\,t_\mathrm{1}\mathbbm{1}_\mathrm{1}(j) - c\,t_\mathrm{2}\mathbbm{1}_\mathrm{2}(j)\,,
\end{multline}
which can be equivalently written in matrix form as \eqref{eqn:rGp}. This relates the range residuals to the \ac{pvt} update vector $\Delta\bm{p} = [\Delta x,\Delta y,\Delta z,\Delta t_\mathrm{1}, \Delta t_\mathrm{2}]^\mathrm{T}$. 

Notice that, the residuals as ordered naturally as  
\begin{align}
    \Delta\bm{r} &\triangleq 
    \begin{bmatrix}
    \Delta r_1\\
    \vdots\\
    \Delta r_{N_1}\\
    \Delta r_{N_1+1}\\
    \vdots\\
    \Delta r_{N}
    \end{bmatrix}
     =
    \begin{bmatrix}
    \Delta\bm{r}_1\\
    \Delta\bm{r}_2
    \end{bmatrix}\;.
\end{align}
Lastly, given $\bm{G}$, we can compute its Moore-Penrose inverse $\bm{H}$ and solve~\eqref{eqn:pHr}.

More details about the \ac{pvt} derivation and other aspects related to it, such as the impact of $\bm{H}$ on the solution accuracy, can be found in~\cite[Ch. 2,7]{kaplan} and \cite[Ch. 21]{teunissen2017}.

\section{PVT Computation with Unique Time Reference}\label{app:pvt2}

If the \ac{isb} between the system is known it is possible to take into account and solve the \ac{pvt} linearization considering just one of the $M$ time references. 
Without loss of generality, we consider the case where measurements are collected from two constellations and the  $\mathrm{ISB} = t_\mathrm{2} - t_\mathrm{1}$ is known by the receiver. This relates to the case where the signals are received from both GPS and Galileo and the GGTO is obtained from the I/NAV messages.  
Next, \eqref{eq:rangeModel} becomes 
\begin{equation}
        r_j = \rho_j + c t_j -c t_\mathrm{1}  - c \, \mathrm{ISB} \, \mathbbm{1}_\mathrm{2}(j) + D_{\mathrm{atm},j}+ \eta\,,
\end{equation}
Next, we can consider this as if the satellites were synchronized to the same time reference, thus matrix $\bm{G}$ becomes
\begin{equation}
    \bm{G} \triangleq
    \begin{bmatrix}
    a_{x,1} & a_{y,1} & a_{z,1} & 1 \\
    \vdots & \vdots & \vdots & \vdots \\
    a_{x,N_1} & a_{y,N_1} & a_{z,N_1} & 1 \\
    a_{x,N_1+1} & a_{y,N_1+1} & a_{z,N_1+1} & 1 \\
    \vdots & \vdots & \vdots & \vdots \\
    a_{x,N} & a_{y,N} & a_{z,N} &  1
    \end{bmatrix}\;.
\end{equation}

The rest of the procedure is analogous to the one reported in Appendix~\ref{app:pvt1}.
\end{document}